\pgfplotsset{width=7cm,compat=1.8}
\pgfplotsset{ytick style={draw=none}, xtick style={draw=none}}
\DeclareMathAlphabet{\mathcal}{OMS}{cmsy}{m}{n}
\SetMathAlphabet{\mathcal}{bold}{OMS}{cmsy}{b}{n}
\newcommand{\bnm}{\begin{newmath}}
\newcommand{\enm}{\end{newmath}}
\newcommand{\bea}{\begin{neweqnarrays}}%
\newcommand{\eea}{\end{neweqnarrays}}%
\newcommand{\bne}{\begin{newequation}}
\newcommand{\ene}{\end{newequation}}
\newcommand{\bal}{\begin{newalign}}
\newcommand{\eal}{\end{newalign}}
\newenvironment{newalign}{\begin{align*}%
\setlength{\abovedisplayskip}{4pt}%
\setlength{\belowdisplayskip}{4pt}%
\setlength{\abovedisplayshortskip}{6pt}%
\setlength{\belowdisplayshortskip}{6pt} }{\end{align*}}
\newenvironment{newmath}{\begin{displaymath}%
\setlength{\abovedisplayskip}{4pt}%
\setlength{\belowdisplayskip}{4pt}%
\setlength{\abovedisplayshortskip}{6pt}%
\setlength{\belowdisplayshortskip}{6pt} }{\end{displaymath}}
\newenvironment{neweqnarrays}{\begin{eqnarray*}%
\setlength{\abovedisplayskip}{-1pt}%
\setlength{\belowdisplayskip}{-1pt}%
\setlength{\abovedisplayshortskip}{1pt}%
\setlength{\belowdisplayshortskip}{1pt}%
\setlength{\jot}{-0.4in} }{\end{eqnarray*}}
\newenvironment{newequation}{\begin{equation}%
\setlength{\abovedisplayskip}{4pt}%
\setlength{\belowdisplayskip}{4pt}%
\setlength{\abovedisplayshortskip}{6pt}%
\setlength{\belowdisplayshortskip}{6pt} }{\end{equation}}
\newcommand{\paraheading}[1]{\noindent\textbf{#1\;}}
\newcounter{ctr}
\newenvironment{newitemize}{%
\begin{list}{\mbox{}\hspace{5pt}$\bullet$\hfill}{\labelwidth=15pt%
\labelsep=4pt \leftmargin=12pt \topsep=3pt%
\setlength{\listparindent}{\saveparindent}%
\setlength{\parsep}{\saveparskip}%
\setlength{\itemsep}{3pt} }}{\end{list}}
\newlength{\saveparindent}
\newlength{\saveparskip}
\DeclareMathSymbol{\mlq}{\mathord}{operators}{``}
\DeclareMathSymbol{\mrq}{\mathord}{operators}{`'}
\def\suchthat{\: |\:}
\newcommand{\E}{{\rm I\kern-.3em E}}
\newcommand{\Prob}[1]{{\Pr\left[\,{#1}\,\right]}}
\newcommand{\xor}{\oplus}
\newcommand{\send}{{\ensuremath{\rightarrow}}}
\newcommand{\getsr}{{\:{\leftarrow{\hspace*{-3pt}\raisebox{.75pt}{$\scriptscriptstyle\$$}}}\:}}
\providecommand{\true}{\mathsf{true}}
\providecommand{\false}{\mathsf{false}}
\def\negl{\mathsf{negl}}
\newcommand{\secref}[1]{\mbox{Section~\ref{#1}}}
\newcommand{\apref}[1]{\mbox{Appendix~\ref{#1}}}
\newcommand{\appref}[1]{\apref{#1}}
\newcommand{\thref}[1]{\mbox{Thm.~\ref{#1}}}
\newcommand{\propref}[1]{\mbox{Proposition~\ref{#1}}}
\newcommand{\figref}[1]{\mbox{Fig.~\ref{#1}}}
\renewcommand{\eqref}[1]{\mbox{(\ref{#1})}}
\newcommand{\tabfontsize}{\scriptsize}
\newcommand{\gamesfontsize}{\footnotesize}
\providecommand{\setstretch}[1]{}
\newcommand{\mpage}[2]{\begin{minipage}[t]{#1\textwidth}\setstretch{1.03}\gamesfontsize\vspace{0pt}  #2 \end{minipage}}
\newcommand{\framedminipage}[2]{\framebox{\mpage{#1}{#2}}}
\newcommand{\hfpagess}[4]{
  \begin{tabular}[t]{c@{\hspace*{.2em}}c}
    \framedminipage{#1}{#3}
    & \framedminipage{#2}{#4}
  \end{tabular}
}
\newcommand{\hfpagesss}[6]{
  \begin{tabular}[t]{c@{\hspace*{.5em}}c@{\hspace*{.5em}}c}
    \framedminipage{#1}{#4}
    & \framedminipage{#2}{#5}
    & \framedminipage{#3}{#6}
  \end{tabular}
}
\newcommand{\calA}{{\mathcal A}}
\newcommand{\calC}{{\mathcal{C}}}
\newcommand{\calX}{{\mathcal{X}}}
\newcommand{\calY}{{\mathcal{Y}}}
\newcommand{\alphabet}{\Sigma}
\def \key {k}
\def \keylen {\kappa}
\def \ctxt {ct}
\providecommand{\concat}{\,\|\,}
\def \part {part}
\def \adv {{\mathcal A}}
\newcommand{\vecx}{\vec{x}}
\newcommand{\advA}{{\mathcal{A}}}
\newcommand{\advB}{{\mathcal{B}}} %
\newcommand{\advG}{{\mathcal{G}}}
\newcommand{\G}{\advG}
\newcommand{\Colon}{{\::\;}}
\renewcommand{\paragraph}[1]{\vspace*{6pt}\noindent\textbf{#1}\;}
\newcounter{mytable}
\def\mytable{\begin{centering}\refstepcounter{mytable}}
\def\endmytable{\end{centering}}
\newcounter{myfig}
\def\myfig{\begin{centering}\refstepcounter{myfig}}
\def\endmyfig{\end{centering}}
\newtheorem{theorem}{Theorem}[section]
\newtheorem{proposition}{Proposition}
\def \blackslug{\hbox{\hskip 1pt \vrule width 4pt height 8pt
    depth 1.5pt \hskip 1pt}}
\def \qed{\quad\blackslug\lower 8.5pt\null\par}
\def \zo  {\{0,1\}}
\newcommand{\ts}{t}
\newcounter{mynote}[section]
\newcommand{\change}[1]{{\color{blue} #1}}
\newcommand\ignore[1]{}
\newcounter{rcnote}[section]
\newcommand{\mytab}{\hspace*{.4cm}}
\newcommand{\rhf}[2]{R_{f, \gamma}}
\def\trans{\textnormal{\ensuremath{T}}}
\def\H{\ensuremath{\mathsf{H}}\xspace}
\def\hmac{\ensuremath{\mathsf{HMAC}}\xspace}
\newcommand{\return}{\textrm{Return}~}
\DeclareDocumentCommand{\edist}{o o}{
  \ensuremath{
    \IfNoValueTF{#1}{{d}}{{\sf d}(#1,#2)}
  }
}
\newcommand{\skegen}{\ensuremath{\mathsf{K}}}
\newcommand{\skeenc}{\ensuremath{\mathsf{E}}}
\newcommand{\skedec}{\ensuremath{\mathsf{D}}}
\newcommand{\ske}{\mathcal{E}}
\newcommand{\states}{\mathcal{S}}
\DeclareMathSymbol{\mlq}{\mathord}{operators}{``}
\DeclareMathSymbol{\mrq}{\mathord}{operators}{`'}
\newcommand{\name}{\textsf{eTAP}\xspace}
\newcommand{\stap}{\name}
\newcommand{\dt}{x}
\newcommand{\payload}{v}
\newcommand{\dtp}{\payload}
\newcommand{\da}{y}
\newcommand{\ei}{e}
\newcommand{\eis}{\ei_s}
\newcommand{\eir}{\ei_r}
\newcommand{\LT}{X}
\newcommand{\LA}{Y}
\newcommand{\Gb}{\textsf{Gb}}
\newcommand{\Ev}{\textsf{Ev}}
\newcommand{\En}{\textsf{En}}
\newcommand{\De}{\textsf{De}}
\newcommand{\fone}{f_1}
\newcommand{\ftwo}{f_2}
\newcommand{\const}{c}
\newcommand{\cone}{\const_1}
\newcommand{\ctwo}{\const_2}
\newcommand{\TS}{\text{TS}\xspace}
\newcommand{\AS}{\text{AS}\xspace}
\newcommand{\TAP}{\text{TAP}\xspace}
\newcommand{\TC}{\text{TC}\xspace}
\newcommand{\wlabel}{L}
\newcommand{\encd}{\tilde{d}}
\newcommand{\encs}{\tilde{s}}
\newcommand{\ench}{\tilde{h}}
\newcommand{\dfa}{\Gamma}
\renewcommand{\trans}{\delta}
\newcommand{\transstring}{{\vec{\delta}}}
\newcommand{\finalstates}{\mathcal{S}_F}
\newcommand{\nstates}{q}
\newcommand{\Trans}{\Delta}
\newcommand{\Transstring}{{\vec{\Delta}}}
\newcommand{\State}{S}
\newcommand{\state}{s}
\newcommand{\delay}{\tau}
\newcommand{\sects}{\bm{\key_{\scaleto{T}{3pt}}}}
\newcommand{\secas}{\bm{\key_{\scaleto{A}{3pt}}}}
\newcommand{\sectservice}{\bm{\key_{\scaleto{\TS}{3pt}}}}
\newcommand{\secaservice}{\bm{\key_{\scaleto{\AS}{3pt}}}}
\newcommand{\lsb}{\textsf{lsb}}
\newcommand{\proto}{{\scaleto{\textsf{etap}}{5pt}}}
\newcommand{\obliv}{\mathsf{Obliv}_\calA^{\proto}}
\newcommand{\auth}{\mathsf{Auth}_\calA^{\proto}}
\newcommand{\privone}{\mathsf{Priv}_\advB^{\proto, 1}}
\newcommand{\privts}{\mathsf{Priv}_{\rm TS}}
\newcommand{\privas}{\mathsf{Priv}_{\rm AS}}
\newcommand{\Ytilde}{Y'}
\newcommand{\payloadk}{k_v}
\newcommand{\tsexec}{{\sf TSExec}}
\newcommand{\asexec}{{\sf ASExec}}
\newcommand{\tapexec}{{\sf TAPExec}}
\newcommand{\cktgarbling}{{\sf CktGarbling}}
\renewcommand{\change}[1]{{\color{black} #1}}
\def\BibTeX{{\rm B\kern-.05em{\sc i\kern-.025em b}\kern-.08em
    T\kern-.1667em\lower.7ex\hbox{E}\kern-.125emX}}
\begin{document}

\def\papertitle{Data Privacy in Trigger-Action Systems}

\title{\papertitle}

\author{
\IEEEauthorblockN{Yunang Chen, \ 
  Amrita Roy Chowdhury, \ 
  Ruizhe Wang, \\[2pt]
  Andrei Sabelfeld\IEEEauthorrefmark{2}, \ 
  Rahul Chatterjee, \ 
  Earlence Fernandes
}
\vspace{5pt}
\IEEEauthorblockA{University of Wisconsin--Madison \hspace{1in} \IEEEauthorrefmark{2} Chalmers University of Technology}
}

\maketitle

\pagestyle{plain}

\begin{abstract}
  Trigger-action platforms (TAPs) allow users to connect independent 
  web-based or IoT services to achieve useful automation. They provide a simple
  interface that helps end-users create trigger-compute-action rules that pass
  data between disparate Internet services.  Unfortunately, TAPs
  introduce a large-scale security risk: if they are compromised, attackers will
  gain access to sensitive data for millions of users. To avoid this risk, we propose \name, a privacy-enhancing trigger-action platform
  that executes trigger-compute-action rules
  without accessing users' private data in plaintext or learning anything about
  the results of the computation. We use garbled
  circuits as a primitive, and leverage the unique structure of
  trigger-compute-action rules to make them practical.  We formally state and
  prove the security guarantees of our protocols. We prototyped \name, which supports the most commonly used operations on
  popular commercial TAPs like IFTTT and Zapier.  Specifically, it supports
  Boolean, arithmetic, and string operations on private trigger data and can run
  100\% of the top-500 rules of IFTTT users and 93.4\% of all publicly-available rules on
  Zapier. Based on ten existing rules that exercise a wide variety of operations, we show that \name has a modest performance impact: on average rule execution latency increases by 70~ms (55\%) and throughput reduces by 59\%.

\end{abstract}

\section{Introduction}
\label{sec:intro}


Trigger-action platforms (TAPs), such as IFTTT~\cite{IFTTT},
Zapier~\cite{zapier}, and Microsoft Power Automate~\cite{msft-flow} 
are web-based systems that enable users to stitch together their cyber-physical and
digital resources (e.g., IoT devices, GMail, Instagram, Slack) to achieve useful
automation.  TAPs provide a simple \emph{trigger-compute-action} paradigm and an
easy-to-use interface to program automation rules.

For example, using their smartphone, a user can setup a rule that
checks if an email contains the word ``confidential'' and, if so, sends an
SMS with the subject line and the sender's address to a pre-specified number
(\figref{fig:tas}). Instead of an SMS, the rule could also blink a smart light
whenever a matching email arrives. To execute this rule on a TAP, when an email
arrives (\emph{trigger}), the mail service (\emph{trigger service}) sends the
email to the TAP that runs the string search (\emph{computation}), which then
contacts an SMS gateway or a smart bulb service (\emph{action service})
with required information to perform the \emph{action}. 
We refer to the combination of trigger/action services and the TAP
as a trigger-action system --- a key ingredient for fulfilling the promise of the IoT~\cite{IFTTT:usage:2018}.  They
provide a layer of abstraction that enables trigger and action services to
develop APIs independently without worrying about compatibility with each
other.

These benefits unfortunately come at the high price of private data disclosure to the TAPs. Even the simple rule discussed above reveals the user's private emails to the TAP. As the TAP is the center of communication between triggers and actions, it can launch \emph{person-in-the-middle} attacks by invisibly collecting private information on all of its users, 
similar to what has already been happening on centralized ride-hailing platforms~\cite{databreach,theguardian}.
Due to the highly compatible nature of TAPs, 
this data includes location, voice commands, fitness data, pictures, files, etc.~\cite{IFTTT:Privacy} 
and is limited only by the variety of online services of users (e.g., IFTTT supports 600 services~\cite{IFTTT:use}). Commercial TAPs do not provide any technical protections for user data. For example, IFTTT's terms of use explicitly state that they collect personal data from third parties, and may pass it to other third parties, partners, or any company that might acquire IFTTT~\cite{IFTTT:Privacy}.

\begin{figure}[t!]
  \centering \includegraphics[width=0.48\textwidth]{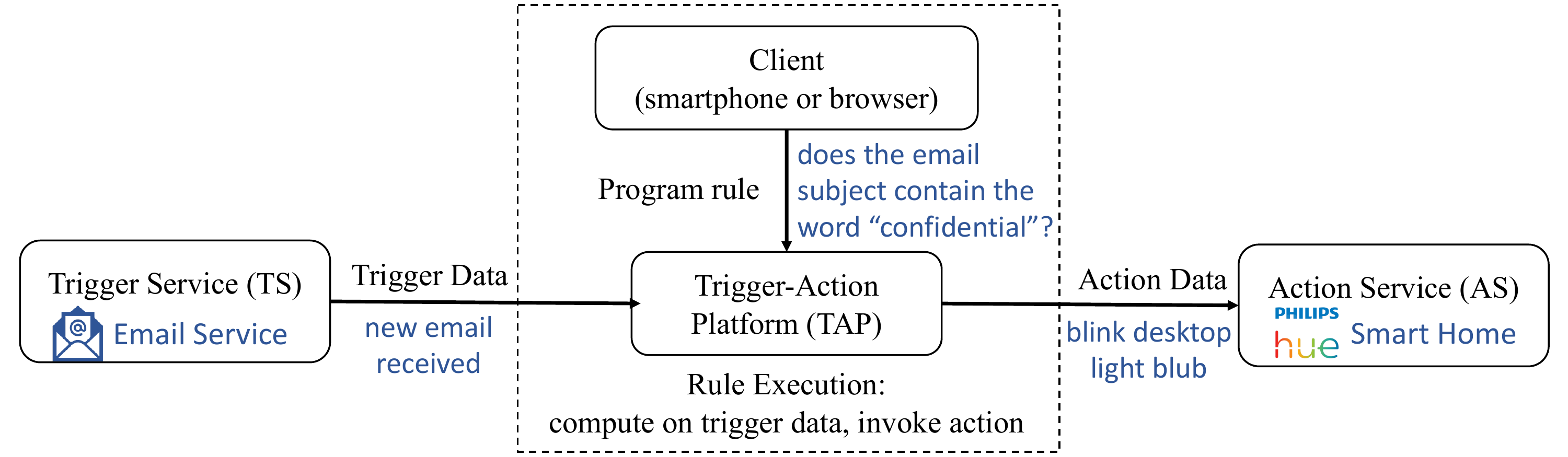}
    \caption{Overview of current trigger-action systems. The dataflow for the example rule  is illustrated in blue color: ``IF I receive an email containing the word `confidential', THEN blink my desktop smart light.''}
\label{fig:tas}
\end{figure}


Furthermore, because TAPs are widely-used centralized web services (e.g., IFTTT
has more than 20 million users~\cite{IFTTT:usersize}), they are attractive
targets for attackers. Breaches of cloud services are commonplace~\cite{equifax-breach, voter-breach, dropbox-breach,ebay-breach}. 
Attackers
sometimes even have continued access to the compromised service for days, and
even weeks before getting detected~\cite{zdnet:stack-overflow-hack,
  zdnet:telecom-hack, zdnet:exim-vuln}. 
A similar breach will have disastrous consequences for TAP users.  Such privacy
risks might discourage users as well as trigger/action services from using
TAPs. Indeed GMail, due to security and privacy concerns, pulled back some of
its APIs from IFTTT~\cite{gmail-ifttt-removal}.


In this paper, we introduce \name, an encrypted
trigger-action platform that executes user rules without accessing the
underlying user data in plaintext. Thus, \name provides confidentiality even when the
attacker fully controls the TAP. Although this problem fits in the general
framework of secure function evaluation
(SFE)~\cite{yao1982protocols,yao1986generate}, building a functional and secure
trigger-action platform with good performance requires overcoming several
challenges.

First, we desire confidentiality of user's data and authenticity of
computation when the TAP is compromised and acts
maliciously. While there are protocols for SFE that provide
security even if some parties act maliciously~\cite{lindell2008implementing,frederiksen2014faster}, these constructions are not yet  practical~\cite{lindell2008implementing,rindal2016faster}. 
Second, using off-the-shelf
protocols for SFE will require invasive changes to the architecture of trigger-action
systems that break the independence between trigger and action services, making them less useful. Third, running arbitrary computations on the TAP using SFE will be inefficient. 

We leverage the unique structure and threat model of trigger-action systems to
overcome these challenges. At a high-level, we create a trusted generator of
garbled circuits (GCs). 
This allows \name to use semi-honest implementations of SFE coupled with a few
efficient extensions, which we contribute with security proofs, to achieve
security against a fully malicious circuit evaluator.

In our setting, the user's smartphone, a standard component in TAP design, plays
the role of a trusted circuit generator that periodically generates and
transmits garbled circuits to the untrusted TAP. The trigger service garbles
sensitive data when it is available and calls the TAP, which then executes the
circuit and contacts the action service with the (garbled) results. The action service
performs security checks and then executes the action.
We assume that the user's phone is fully trusted, while TAP is malicious. An
attacker interested in compromising a large number of users is more likely
to try compromising the TAP than the user's phone.  To maintain the same level
of trust as current TAPs provide, we treat the trigger and action services as
semi-honest --- they follow the protocol but can be inquisitive --- and they should
not learn 
any new private information that they do not learn in the current setting.

To overcome the challenge concerning the efficiency of arbitrary computations,
we perform an analysis of the types of computations in popular commercial
trigger-action platforms. We show that the computations supported by TAPs are
stateless and use Boolean, arithmetic, or string operations. Most GC libraries support Boolean and arithmetic operations natively, but
none support string operations out of the box. \change{Existing work contributes oblivious deterministic finite automata that can match regular expressions~\cite{mohassel2010}. However, it does not support substring extraction and replacements --- a common operation in trigger-action systems.} We therefore introduce a novel
approach to efficiently encode a subset of fixed-length string operations
as 
Boolean circuits. We then use the standard GC approach to evaluate them securely
on the TAP. 
\change{Our approach also has the advantage of unifying all the formal security properties of \name rather than having a separate set of proofs for string operations}. 
\name can compute 93.4\% of all
rules published on Zapier that require computation and 100\% of the 500 most-used rules on IFTTT. \change{(Of course, \name supports all rules that do not require
any computation.)}

We formally prove the security of \name in the presence of a malicious
TAP (\secref{sec:sec_analysis}). We show that the malicious
TAP can execute user rules without learning the private data or tampering with the result of computation. \name also provides mutual secrecy between the
trigger and action services.

\name is a clean-slate approach to building trigger-action systems and lays a foundation for securing the data they handle. However, it does require some changes to current systems. First, the trigger/action services need to understand our protocols. We provide simple shims that they can use to upgrade their functionality while maintaining their independence and RESTful nature. Second, the user's client device takes on a more prominent role because it generates garbled circuits. As efficient circuits cannot be reused in general, the client has to periodically generate and transmit these circuits to the TAP. We estimate that this process has a modest impact:  
the trusted client is expected to transfer 61.7~MB of data per day for an average user.  
This is equivalent to the data consumed by uploading a one-minute of Full-HD video.

The paper offers the following contributions:
\begin{newitemize}
\item We design \name, the first trigger-action platform that can execute trigger-compute-action rules (Boolean, arithmetic, fixed-length string) without
  accessing the underlying trigger data in plaintext. 

\item We outline ideal security expectations of a privacy-sensitive
  trigger-action system, and formally prove that \name meets those security
  properties.

\item We implement and evaluate \stap. It can support 93.4\% of
  function-dependent rules used in Zapier and 100\% of the 500 most-used rules
  in IFTTT. We show that most functions can be evaluated in the order of
  milliseconds with about $2$x computational cost. Code is available at \url{https://github.com/EarlMadSec/etap}.
\end{newitemize}

\section{Background}\label{sec:background}



We discuss background information on trigger-action systems and the cryptographic primitives that we use.

\subsection{Trigger-Action Systems} \label{sec:background_tas}
Trigger-action systems allow stitching together disparate online services using
a trigger-compute-action paradigm to automate different tasks. There are three
main components of the system: trigger services (TSs), action services (ASs),
and a trigger-action platform (TAP). We also explicitly mention another
computing component: the user's client device that they use to interface with
the trigger-action system. \figref{fig:tas} shows the interactions between
different components.

Trigger and action services are online services for IoT or web apps.  There are a
plethora of such services such as Instagram, Slack, GMail, Amazon
Alexa, Samsung SmartThings, and many others. These services rely on REST APIs to
send and receive data, and each service may support several APIs to provide different functionalities.
They typically support the OAuth protocol~\cite{oauth},
which is used to delegate authorization. With OAuth tokens, a third party,
such as a TAP, can access APIs and execute trigger-compute-action rules.


Commercial TAPs are compatible
with hundreds of trigger and action services, allowing each trigger or action
service to focus on building their own REST APIs without worrying about
compatibility with each other. \change{Third-parties own a large majority of these services that integrate with IFTTT (e.g., LG, Samsung, Google).\footnote{\change{As of Aug 2020, 417 out of 522 services on IFTTT are third-party that require a user to login and authorize access to IFTTT.}}}


Additionally, modern TAPs also allow
performing non-trivial computation over the trigger data. The ability to modify
the trigger data provides great flexibility for TAPs to achieve compatibility
between trigger and action services (e.g., 
two calendar apps that use different date formats).
The TAP also uses operations
to decide whether or not it should send a message to the action
service (e.g., does the email contain the word ``confidential''). 
TAPs serve as a computation and communication hub.
Zapier has explicitly supported computation on trigger
data from the very beginning~\cite{zapier:filter, zapier:formatter}. IFTTT has recently
started to expose its computing interface to
end-users~\cite{IFTTT:applet}. Thus, trigger-action systems are evolving to be
\emph{trigger-compute-action} systems. We use these two terms interchangeably
throughout the paper.

Users interface with trigger-action system through a client device, typically a
smartphone. The user programs rules by selecting a trigger service, then
specifying a computation on that data using a library of functions,
and finally selecting an action to be run on the action service. As noted
before, the user also authorizes the \TAP to access their online services using
the client device.

\paragraph{Privacy and authenticity risks in current TAPs.}
Commercial TAPs operate on sensitive trigger data of millions of users, making
them an attractive target for attackers. If the TAP is compromised, the attacker
gains the privilege of the TAP --- unfettered access to user data and
resources. The types of data are limited only by the set of rules that users
create and the end-point services that the TAP supports. Commercial systems like
IFTTT support approximately 600 services currently~\cite{IFTTT:use}. The
sensitive information from these services can be emails (our earlier example),
data files, health information, voice commands, images, etc.

Fernandes et al.~\cite{fernandes2018decentralized} first noted this problem with TAPs, 
and discussed a more appropriate threat model where TAP can act maliciously. Under
this model, they addressed a sub-problem: preventing a compromised TAP from
misusing overprivileged OAuth tokens. Their work adds integrity to the rules,
but it does not allow any computation over the trigger data. 

By contrast, we target modern TAPs that allow computation over the trigger
data. Beyond integrity, we also aim to protect the \emph{privacy} of that
data. Our work provides a way for TAPs to compute on sensitive data without
seeing the plaintext, despite arbitrarily deviating from the protocol. We
believe such privacy risks might be preventing trigger-action systems from
achieving their true potential. Furthermore, we provide computational integrity
as well, thus subsuming prior work~\cite{fernandes2018decentralized}. 
\subsection{Cryptographic Primitives} \label{sec:crypto_tap}


\paragraph{Symmetric-key encryption scheme.} 
Let $\ske = (\skegen, \skeenc, \skedec)$ be a semantically secure encryption
scheme. The key generation function $\skegen(1^\keylen)$ generates a
$\keylen$-bit uniformly random key $\key$; the randomized encryption scheme
$\skeenc$ takes a message $x\in\calX$ and the generated key $\key$ as input and
outputs a cipher text $\ctxt \getsr\skeenc(\key, x)$; and the deterministic
decryption function takes a cipher text and the key $\key$ as input and outputs a
message, $x \gets \skedec(\key, \ctxt)$, or $\bot$ (if decryption fails).

We use an authenticated encryption scheme~\cite{bellare2000authenticated} that
achieves the IND-CCA security guarantee. This ensures both the
privacy and authenticity of plaintext.



\paragraph{Garbled circuits (GCs).} This is a cryptographic technique for secure
function evaluation (SFE)~\cite{yao2005design,bellare2000authenticated}.
Following Bellare et al.'s~\cite{bellare2012foundations} notations, a garbling
scheme $\G$ is a tuple of four functions $\G=(\Gb, \En, \De, \Ev)$. Let
$f\Colon\zo^n\to \zo^m$ denote the function to be evaluated securely. Here,
$\Gb$ is a randomized \emph{garbling \ function} that converts the function $f$
(represented as a Boolean circuit) into a \emph{garbled} circuit~$F$. It also
outputs encoding and decoding information $e$ and $d$ needed for encoding inputs
and decoding the outputs. As such, $(F, e, d)\getsr\Gb(1^\keylen, f)$, where
$\keylen$ is the security parameter. 
The \emph{encoding \ function} $(\En)$ encodes an input $x\in\zo^n$ using the
encoding information $e$, which is the set of labels corresponding to the value of each bit in $x$;  $X\gets\En(e, x)$. 
The \emph{evaluation \ function} $(\Ev)$ enables evaluation of the
garbled circuit $F$ over the garbled input $X$ to generate the garbled output
$Y\gets\Ev(F, X)$, which is the set of labels corresponding to the output wires.
Finally, the \emph{decoding \ function} $(\De)$ decodes the output of the
evaluation $y\gets\De(d, Y)$.

Garbling involves generating two random labels $\wlabel_1^w$ and
$\wlabel_0^w$ for each of its wires, representing the $\true$ and $\false$ value
for the wire $w$. 
A number of optimizations have been proposed to reduce the size of a garbled
circuit. One of them 
is the free XOR technique~\cite{kolesnikov2008improved}, which requires all wire labels to follow the form
$\wlabel_1^w = \wlabel_0^w \oplus \eir $, where $\eir$ is a string randomly
chosen by $\Gb$. This allows XOR gates in the circuit to be computed with only
the input wire labels.

Typically, GCs are used for 2-party secure function computations where two
parties with their respective private inputs $x_1$ and $x_2$ run the protocol
such that, no party learns more than $f(x_1,x_2)$ for a public function $f$. The
protocol works as follows. First, one of the parties, called the
\emph{generator}, uses the garbling function to
generate 
$(F, e, d)\getsr\Gb(1^\keylen, f)$. Next, it encodes its input as
$X_1\gets \En(x_1,e)$. The other party, called the \emph{evaluator}, receives $F$
and $X_1$ and also retrieves $X_2\gets\En(e, x_2)$ --- encoding of its private
input $x_2$ --- using an oblivious transfer (OT)~\cite{Rabin:OT} protocol with
the generator.  Following this, the evaluator runs the garbled circuit to
obtain $Y\gets \Ev(F,(X_1,X_2))$. Finally, either party can decode $Y$ to obtain
the final output $y\gets\De(d,Y)$.

A secure garbling scheme provides the following security properties~\cite{bellare2012foundations}: 
(a) \emph{Message \ obliviousness.} Given $(F, X)$, an adversary learns nothing
  about $x$ or $y$ (beyond what is known from $f$).
(b) \emph{Input \ privacy.} Given $(F, X, d)$, an adversary learns nothing
  about $x$ beyond what is known from $y$ and $f$. 
(c) \emph{Execution \ authenticity.} 
  Given a garbled input $X$, it is hard to find $\Ytilde$ such that
  $\Ytilde\ne\Ev(F, X)$ and $\De(d, \Ytilde)\ne\bot$.

We use these cryptographic primitives to design
\stap. In~\secref{sec:supported-functions}, we analyze existing TAPs to
understand what functions \stap must support. We give the detailed protocol
in~\secref{sec:design}, with its security proven in~\appref{sec:sec_analysis}.

\section{Analysis of Current Trigger-Action Systems}
\label{sec:supported-functions}
We analyze two popular commercial TAPs, IFTTT~\cite{IFTTT} and
Zapier~\cite{zapier} with the following goals in mind: (1) understand the
sensitive data that TAPs compute on; (2) establish that although TAPs offer a
variety of operations on data, they are not arbitrary and will fit well in a
garbled circuit framework; and (3) derive an abstract TAP computational model
that will help ensure our system supports realistic functionality.

\paragraph{Types of sensitive information.} The current trigger-action system design gives the cloud-based TAP complete access to trigger data. 
To better characterize the types of sensitive trigger data accessible to TAPs,
we analyzed the IFTTT dataset mentioned in \cite{mi2017empirical}, by mapping
each of its 320,000 IFTTT rules to one of the three trigger sensitivity levels defined by Bastys
et al.~\cite{DBLP:conf/ccs/BastysBS18} --- 
public, private, and time-sensitive. Private triggers contain information like
emails and calendar events, whereas public triggers contain information like
news and weather reports. The time-sensitivity level means that private
information exists in the availability of the trigger message. For example,
considering the rule ``IF I leave home, THEN turn off the WiFi,'' the TAP will
learn whether the user leaves home depending on whether it receives a message
from the trigger service. \figref{fig:classify} shows a breakdown of sensitive
trigger data according to how frequently they are used.

We observe that although a significant percentage (15\%) of triggers and action
APIs supported by IFTTT are time-sensitive, in reality, they are rarely used ---
only 0.8\% of all available rules in IFTTT (or 0.9\% of all installed rules) use
a time-sensitive trigger. We also observe that, although there are fewer private
triggers than public ones, private triggers are most frequently used --- 61\% of
all installed rules contain a private trigger API. \change{These APIs return private information like emails, messages, location traces, photos, sensitive files, medication lists, health information, etc. Thus, we design \name to protect the vast majority of private trigger information that people actually use in real-world rules.} We do not currently provide
confidentiality for time-sensitive information, but we outline possible
approaches using standard techniques like cover traffic in
Section~\ref{sec:discussion}.



\begin{figure}[t]
\centering
  \begin{tikzpicture}[scale=0.75]
    \begin{axis}[
      xbar stacked,
      legend style={
        legend columns=1,
        at={(1.3, 0.8)},
        anchor=north,
        draw=none
      },  
      legend cell align={left},
      ytick=data,
      y=-0.6cm,
      enlarge y limits={abs=0.45cm},
      bar width=0.3cm,
      xmin=0,
      xmax=100.5,
      symbolic y coords={triggers, {unique rules}, {installed rules}},
      xlabel={percentage \%},
      ]
      \addplot+[xbar, pattern=north east lines, pattern color = brown, draw=brown]  coordinates {(58,triggers) (61.2,{unique rules}) 
        (38.1,{installed rules})};
      \addplot+[xbar, pattern=crosshatch, pattern color = blue, draw=blue]  coordinates {(27,triggers) (38,unique rules) 
        (61,installed rules)};
      \addplot+[xbar, pattern=crosshatch dots, pattern color = red, draw=red]  coordinates {(15,triggers) (0.8,unique rules)
        (0.9,installed rules)};
      \legend{~public, ~private, ~time-sensitive}
    \end{axis}
  \end{tikzpicture}
  \caption{Breakdown of triggers, rules, and installed rules in IFTTT based on their sensitivity levels.
  }
  \label{fig:classify}
\end{figure}

\paragraph{Operations on trigger data.} IFTTT allows users to express computation on trigger data using \emph{filter
  code} --- small snippets of TypeScript with some restrictions (e.g., no I/O operations)~\cite{ifttt-code}. Zapier rules contain two components: \emph{filters} that compute a predicate on the trigger
data, 
and \emph{formatters} that modify the trigger data. Multiple filters and
formatters can be chained together.

To understand the common operations in IFTTT, we again used the dataset of Mi et al.~\cite{mi2017empirical}. 
We selected the $500$ most popular rules (based on user installation count) that are
connected to private trigger APIs. Unfortunately, a challenge is that filter
codes for IFTTT rules are not public. We therefore manually approximated the
filter code for these rules by (1) estimating the functionality of each rule
based on their title and description, (2) examining the corresponding
trigger/action APIs, and (3) deducing the operations that are required to
convert trigger fields to action
fields.


We also crawled the Zapier website for one day in October 2019 and collected all the
publicly available rules that require computations on trigger data~\cite{zapier:filter, zapier:formatter}.  
We collected a total of 378 rules and extracted the
operations used in those rules.

The operations we found in IFTTT and Zapier are shown
in~\figref{fig:functions}. Current garbled
circuit libraries support a majority of these operations natively. The main
challenge is string operations, for which we contribute a novel technique to convert deterministic
finite automata into Boolean circuits (\secref{sec:boolfunc}).




\begin{figure}[t]
\centering\tabfontsize
\begin{tabular} {l p{2.3cm} p{4.6cm}} 
  \toprule
  \textbf{Type} & \multicolumn{1}{l}{\textbf{Operation}} & \multicolumn{1}{l}{\textbf{Description}} \\
  \toprule
  \multirow{3}{*}{Bool}
                & \texttt{x | a} & \texttt{x} OR \texttt{y}  \\
                & \texttt{x \& a} & \texttt{x} AND \texttt{y} \\
                & \texttt{! x} & NOT \texttt{x}  \\
  \midrule
  \multirow{4}{*}{Num}
                & \texttt{x < n} & Is \texttt{x} \emph{less than} \texttt{n}?  \\
                & \texttt{x > n} & Is \texttt{x} \emph{greater than}  \texttt{n}?  \\
                & \texttt{x.mathop(n)} & Basic math ops. ($+,-,\times,\div)$ 
                \\
                & \texttt{x.format()} & Format \texttt{x} into a string \\
  \midrule
  \multirow{13}{*}{Str}
                & \texttt{x == s} & Does \texttt{x} \emph{exactly match} the string \texttt{s} \\
                & \texttt{x.contain(s)} & Does \texttt{x} \emph{contain} the string \texttt{s} \\
                & \texttt{x.startwith(s)} & Does \texttt{x} \emph{start with} the string \texttt{s}  \\
                & \texttt{x.endwith(s)} & Does \texttt{x} \emph{end with} the string \texttt{s}  \\
                & \texttt{x.split(d, i)}  & Split \texttt{x} using delimiter string \texttt{d} and select the \texttt{i}-th substring \\
                & \texttt{x.replace(s, t)} & Replace all occurrences of \texttt{s} in \texttt{x} with \texttt{t}    \\
                & \texttt{x.to\_lowercase()}  & Convert all characters in  \texttt{x} to lowercase \\
                & \texttt{x.truncate(n)}  & Truncate  \texttt{x} to size  \texttt{n} \\
                & \texttt{x.extract\_phone()}  & Extract the first phone number found in \texttt{x}   \\
                & \texttt{x.extract\_email()}  & Extract the first email address found in \texttt{x}   \\
                & \texttt{x.strip\_html()}  & Remove all HTML tags in \texttt{x}  \\
                & \texttt{x.html2markdown()}  & Convert all HTML tags in \texttt{x} to Markdown  \\
                & \texttt{m.lookup(x)}  & Look up the value for the key \texttt{x} in a user-provided map \texttt{m} \\
  \midrule
  \multirow{2}{*}{Any}
                & \texttt{x == null} & Does \texttt{x} \emph{exist}? \\
                & \texttt{x.default(y)}  & Set value of \texttt{x} to \texttt{y} if it does not exist \\
  \bottomrule
\end{tabular}

\caption[]{Operations used in top 500 IFTTT rules with private triggers and all Zapier's function-dependent rules.
}
\label{fig:functions}
\end{figure}

\paragraph{Execution model of trigger-action systems.}
Based on our survey of IFTTT and Zapier, we derive an abstract model of these
trigger-compute-action rules. During rule setup on the client, the user typically
specifies two functions --- a \textit{predicate} $\fone$, and a
\emph{transformation} $\ftwo$. These functions take the trigger data
and some additional user-provided constants as input. 
The predicate function $\fone$ tests the trigger data for a condition to
determine whether TAP should contact AS. The output
of $\fone$ is either $\true$ or
$\false$. 
The transformation function $\ftwo$ modifies the trigger data before sending the
result to AS.  
Both $\fone$ and $\ftwo$ run inside the cloud-based TAP.

Let $\dt\in\calX$ be the part of the trigger data on which TAP performs some
computation, and $\da\in\calY$ be the action data TAP sends to AS, where $\calX$
and $\calY$ are the domains of the trigger and action data, respectively.  Both
$\dt$ and $\da$ can be data structures that contain multiple fields.  We find
that TAPs do not modify some fields of trigger data such as large media 
files, but only forward them to AS.  We denote such trigger data as payload
$\payload$.  Let $\cone, \ctwo\in\calC$ be the two user-provided constants for
the functions $\fone$ and $\ftwo$, where $\calC$ is the domain of the constants.
On receiving $(\dt, \dtp)$ from TS, TAP executes
\bnm \text{``if }\fone(\dt, \cone)=\true
\text{, then send }(\ftwo(\dt,\ctwo),\dtp) \text{ to AS''} \enm
For simplicity, we
assume the domains of $\fone$ and $\ftwo$ to be the
same. 
So, $\fone\Colon\calX\times\calC\to \{\true, \false\}$, and
$\ftwo\Colon\calX\times\calC\to\calY$.

TAPs operate in two modes: 
(1) \emph{polling mode}, where TAP contacts TS at a predefined frequency; (2)
\emph{push mode}, where TS sends a message to TAP when an event occurs. While
our protocol will work with both models, we assume the push model in this paper as
it is more efficient in general.


\paragraph{Example rule.}
We show how our abstract model can instantiate our previous example rule:
``IF I receive an email containing the word `confidential', then send me an SMS.''
The SMS should contain the address of the sender and the email's subject. 
Assume that TAP provides an operation to search over strings, called
\texttt{contain}.
The user sets up a rule by choosing its email provider as the trigger service,
that sends a copy of every new email to TAP.  The action service is an SMS
provider that sends SMS to a user-provided number.  The user then specifies the
\texttt{contain} function to check for the string
$\cone=$``\textsf{confidential}'' on the email's subject line, $x$.
The transformation function $\ftwo$ creates the required data structure to send
the SMS, for example, setting the recipient address as the user-provided phone
number $\ctwo$ and the message body as the concatenation of the sender's address
and the subject.


\section{Design Considerations for Providing Data Confidentiality in
  Trigger-Action Systems}
\label{sec:towards}
Our goal is to protect the confidentiality of private data involved in
trigger-action rules even if they are run on a malicious cloud-based TAP. In
this section, we discuss our threat model, define our security and functionality
goals, and explore the design space.

\subsection{Threat Model and Functionality Goals}\label{sec:threat}

Fernandes et al.~\cite{fernandes2018decentralized} first noted the security and
privacy issues of a compromised \TAP and the related attacker motivations. We
adopt the same attacker model --- \TAP is \emph{malicious}. Specifically, the
attacker: (1) can monitor communications between \TAP and the trigger/action
services; (2) can arbitrarily deviate from the communication protocol by
manipulating, delaying, or dropping the messages; \change{(3) can modify TAP's internal storage and code that includes manipulating and deleting garbled circuits;}
(4) knows API details of
trigger and action services; and (5) knows the functions that are being
evaluated on \TAP. As we use cryptographic techniques for our security
guarantees, we assume that the attacker is computationally bounded.

We assume that the end-point services (trigger and action services) like Samsung
SmartThings, Google Calendar, etc. are \emph{semi-honest} --- they will follow
the protocol as specified, but try to glean more information than 
\change{what they are entitled to know}. This is in line with the trust model used by current
TAPs. Also, if they are compromised, then the attacker can achieve its goals of
accessing and manipulating user data independently of the trigger-action system.
We also assume that \change{TAP is not colluding with TS or AS}. 
\change{As discussed in~\secref{sec:background}, third-parties own a large majority of trigger and action services and thus collusion with TAP is unlikely (for example, there is no incentive for LG or Google to collude with IFTTT to reduce the security of their users). Enforcement of the non-collusion condition can also be done via legal affidavits \cite{legal, CryptE} or techniques that involve using a trusted mediator who monitors the communications between the parties  \cite{non-collusion1,non-collusion2}.}


Finally, we assume that the user trusts their client device. We observe that the attacker is motivated to compromise \TAP
because it will simultaneously be able to attack all users of the platform. An
attack on the client device is not scalable  to
all users easily, and therefore, is less attractive.

\paragraph{Security goals.}  Under this threat model, we want two security
properties for a trigger-action system:
\\[2pt]
\emph{Privacy:} Each party should not learn other parties' data in a trigger-action rule. Specifically, 
    \TAP should not learn the trigger data
  $(\dt, \dtp)$, user-provided constants $(\cone, \ctwo)$, and results of the
  computation (beyond what they already know from the definitions of the
  functions);
  the trigger service (\TS) should not learn the user-provided constants ($\cone, \ctwo$);
  the action service (\AS) should not learn the trigger data $\dt$ or user-provided constants ($\cone, \ctwo$) beyond what is revealed to it after rule execution. Additionally, \AS should not learn the output of transformation function $\ftwo$ or payload $\dtp$ when the predicate function $\fone$ evaluates to $\false$.
\\[2pt]
\emph{Integrity:} \change{The attacker should not be able to modify any
    computations on private trigger data without being detected by  AS.}
  That is to say, \TAP should not be able to trick  \AS into acting on
  illegitimate action data, such as delayed, replayed, or tampered messages that
  are not the result of proper evaluation of the rule. \AS only accepts valid messages
  $\da = \ftwo(\dt, \ctwo)$, where $\dt$ is sent by \TS within the last $\delay$
  seconds (a configurable parameter).

\paragraph{Security non-goals.} Denial of service is outside our scope. A compromised \TAP can indeed drop all messages it receives from \TS and not transmit any message to \AS. \change{Metadata and side-channel attacks are also outside our scope. For example, even if messages are encrypted, the compromised \TAP can observe the timing of messages that arrive from a trigger service or go to an action service. Coupled with semantic knowledge about the services, this might enable the attacker to determine the sensitive data in the rule even if it is encrypted. As discussed in~\secref{sec:supported-functions}, this involves time-sensitive rules which are less used frequently in practice.
    \stap protects the vast majority of sensitive trigger data for which encryption achieves strong security properties. 
    \secref{sec:discussion} outlines standard approaches to protect metadata that we leave as future work.}

\paragraph{Functionality goals.} We want to achieve the security goals while respecting the following functionality goals:
\emph{(1) RESTful API for end-point services}. The end services should be able to design their APIs
independently of each other, as they do currently. These APIs should be RESTful, have minimal computational overhead beyond running the API itself, and do not need to store data or state specific to different trigger-action rules.
\textit{(2) Maintain trigger-compute-action paradigm.} The design should run existing user-created rules without any changes and should maintain the key architectural aspects of current trigger-action systems. Notably, the rules should execute without requiring the client device to be online.




\subsection{Design Space Exploration}\label{sec:design-explore}

We explore a few potential solutions occupying different points in the design space and discuss why they do not meet our functionality or security requirements.

\paragraph{Computation at the edges.} The trigger service can run a user-supplied function
over its private data, encrypt the result, and forward that to \TAP. \change{However, the trigger service has to support an execution infrastructure similar to AWS Lambda, significantly increasing the complexity and overhead of such services and exposing them to additional security risk due to executing third-party code. 
} Furthermore, sensitive data in user-supplied constants
($\cone$, $\ctwo$) will be exposed in plaintext to the trigger
service. 
\change{For example, consider rule R7 from~\figref{fig:rules-exp}, which converts Slack mentions to Asana tasks (a project management tool). It requires users to provide a lookup table of project names. These are sensitive information that should not be revealed to Slack.} Computation can also be moved to the action service, but the same issues exist there as well.



\paragraph{Secure hardware.} It is possible to use hardware-based trusted execution
environments (TEEs) or hardware security modules (HSMs) for computing
the trigger data on TAP, while preserving  confidentiality~\cite{Zavalyshyn+:MobiQuitous20,Schoettler+:Walnut}. Yet besides requiring hardware changes to the TAP servers, current TEEs
suffer from fundamental security design
issues~\cite{foreshadow,plundervolt,sgxpectre}.


\paragraph{Homomorphic encryption of the trigger data.} During rule setup, the
client can specify a symmetric key between the trigger and action service. The
trigger service encrypts its data using this key before sending it to \TAP. This
will provide trigger data confidentiality and allow the \TAP to compute directly
on the encrypted data. However, only specialized schemes like linear homomorphic
encryption and ``somewhat'' homomorphic encryption are practical~\cite{SWHE},
thus limiting expressivity.
\change{For reference, TFHE~\cite{tfhe}, a state-of-the-art library for fully homomorphic encryption, takes 4.45~seconds to compute an addition circuit, which is 3 orders of magnitude slower than our system as evaluated in~\secref{sec:eval}.}
Additionally, protection against a malicious \TAP
would require zero-knowledge proofs~\cite{GMW} of computation that would further
reduce efficiency.


\paragraph{Off-the-shelf secure multi-party computation.}  Secure multi-party
computation (SMC) protocols allow multiple
distrusting parties to compute a function over their private
inputs~\cite{yao1982protocols}. However, efficient off-the-shelf SMC protocols
do not fit our threat model --- TAP is malicious, or architectural
requirements --- needing TC, TS, AS, and TAP to participate in a multi-round
protocol during rule execution. Therefore, we adopt a core
primitive of SMCs --- garbled circuits --- and modify it to our setting.

\paragraph{Secret sharing based SMC.}  Secret sharing is an alternative to
garbled circuits for doing SMC. However, secret sharing-based protocols require
 intensive multi-round communication (e.g., for evaluating multiplication gates).
Additionally, in such protocols every party has to do an equal amount of work,
which will require invasive architectural changes to TS and AS. This violates
our functionality goal. Finally, the malicious versions of these protocols
are not efficient.

\section{Design of Encrypted Trigger-Action Platform}
\label{sec:design}

In this section, we discuss \name's core protocols and analyze how we specialize
garbled circuits to trigger-action systems. A high-level overview of \name is
shown in~\figref{fig:arch_overview}, and the pseudocode is given in~\figref{fig:algo}.  Like a typical trigger-action system
in~\figref{fig:tas}, \name has four components: trusted client's device ($\TC$),
trigger service ($\TS$), action service ($\AS$), and a trigger-action platform
($\TAP$). We describe below how our design modifies these four components while
maintaining the trigger-compute-action paradigm.


\paragraph{Decentralized trust model.}
In the current trigger-action system design, users place all trust within a
centralized cloud-based TAP.  This design leaves open a large-scale security and
privacy risk --- a single compromise of the TAP will simultaneously compromise
all users. To avoid this issue, \name borrows a design element from
DTAP~\cite{fernandes2018decentralized} and designates the user's client device
(smartphone) as the root of trust. Each user \textit{only} trusts their own
smartphone and uses it to program trigger-compute-action rules. As the \name
protocols are open-source, we envision a community of developers building client
apps, much like we have apps for open protocols like SFTP, Telnet, etc. Thus,
the \name cloud component and the client app are built and controlled by 
different entities. Therefore, the client app can still be trusted, even when the TAP
is
compromised.  
\name bootstraps its guarantees on top of this model. In \name, the trusted
client (\TC) is beyond just an interface --- it stores some state (as we
describe below) that is key to its operation.

\subsection{Rule Setup (occurs on trusted client)}
Like in existing trigger action systems, the user can configure a
trigger-compute-action rule on the trusted client app
(\TC) 
using its click-through interface. The user selects a trigger in a trigger
service (\TS), a predicate $\fone$, a data transformation over the trigger data
$\ftwo$, and an action in an action service (\AS). The user also specifies any
constants $c$ if required.

\TC sends the rule descriptions to \TAP and helps the \TAP negotiate OAuth tokens
with TS/AS required for running the rule.
In \name, unlike existing TAPs, \TC shares with \TS and \AS two
uniformly-generated secret keys $\sects$ and $\secas$, upon successful
authorizations.  The key $\sects$ and $\secas$ are tied to the specific trigger
and action API for this user in \TS and \AS\footnotemark.  If a prior rule has
already been set up with the same trigger or action API, then the corresponding OAuth authorization can be skipped and \TC will reuse the
previously generated $\sects$ or $\secas$.  Once the rule is setup, \TS and \AS
store the shared key materials; \TAP stores the OAuth tokens; and \TC stores the
rule ($\fone, \ftwo$), the keys ($\sects, \secas$), and the
constants ($\cone, \ctwo$) provided by the user for the rule.

\footnotetext{For better usability, current TAPs only acquire one OAuth token
  per service that can access all APIs in
  it~\cite{fernandes2018decentralized}. \name can adapt to this model by
  exchanging a service-level key $\sectservice, \secaservice$, and derive the
  API-level keys $\sects, \secas$ from the hash value of
  $\sectservice, \secaservice$ and API URL, as required.}



\begin{figure}[t]
\centerline{\includegraphics[width=0.5\textwidth]{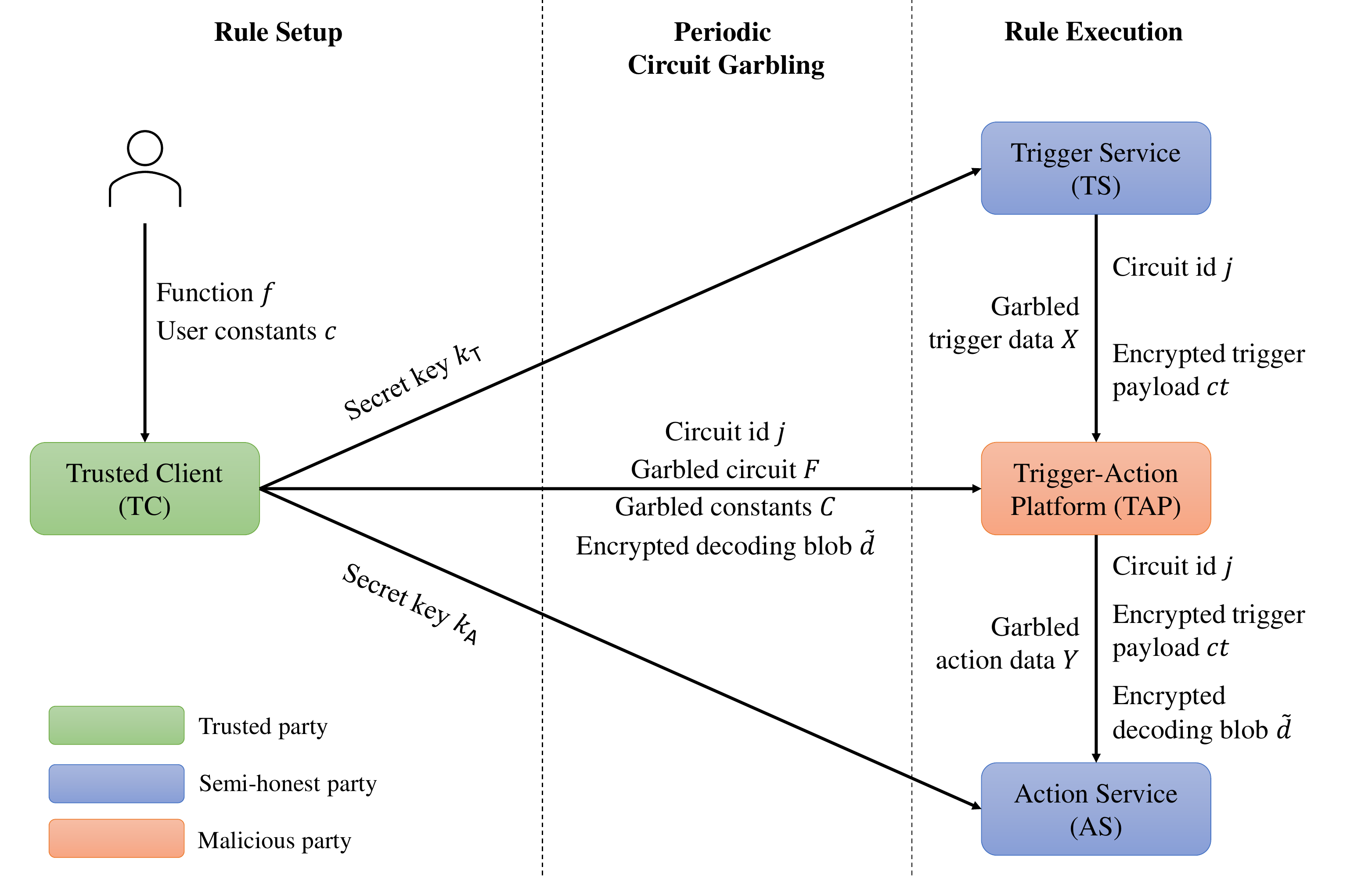}}
\caption{Overview of \stap.}
\label{fig:arch_overview}
\end{figure}

\begin{figure}[t]
\hspace{-1em}
  \hfpagess{0.24}{0.215}{ \setstretch{1.1} \gamesfontsize
    \underline{\cktgarbling$\left((f, \const), (\sects, \secas, j)\right)$}:\\[2pt]
    $\eis \gets \H(\sects \| j \| 0) $\\
    $\eir \gets \H(\sects \| j \| 1) \lor 0^{\keylen-1}1 $\\
    $\payloadk \gets \H(\sects \| j \| 2) $\\
    $\ei \gets (\eis, \eir)$\\
    $(F, \wlabel^{w_0}_0, \ldots, \wlabel^{w_m}_0) \gets \Gb'(\ei, f)$\\
    $d' \gets (\textsf{lsb}(\wlabel^{w_1}_0), \ldots, \textsf{lsb}(\wlabel^{w_m}_0))$\\
    $h \gets \H(\wlabel^{w_1}_0 \concat \ldots \concat \wlabel^{w_m}_0)$\\
    $\encs \getsr \skeenc(L^{w_0}_1 \oplus \secas, (j, \payloadk, \eir, d', h)) $\\
    $\ench\gets\hmac_{\secas}(j\|L^{w_0}_0)$\\
    $\encd \gets (\encs,\ench)$\\
    $C \gets \En(\ei, \const)$\\
    Set $j = j + 1$\\
    Return $j, F, C, \encd$\\

    \underline{\tsexec$\left((\dt, \payload), (\sects, j)\right)$:}\\[2pt]
    $\eis \gets \H(\sects \| j \| 0)$\\
    $\eir \gets \H(\sects \| j \| 1) \lor 0^{\keylen-1}1  $\\
    $\payloadk \gets \H(\sects \| j \| 2) $\\
    $\LT \gets \En\left((\eis,\eir), \dt\right)$\\
    $t \gets \textsf{CurrentTime()} $\\
    $ct \getsr \skeenc(\payloadk, (t, \payload))$\\
    Set $j = j + 1$\\
    $\return j, \LT, ct$
   }
  {\gamesfontsize
    \underline{\tapexec$\left((j,X,ct), (F,C,\encd)\right)$:}\\[2pt]
    $Y \gets \Ev(F, (\LT, C))$\\
    $\return j, \LA, ct,\encd$\\

    \underline{\asexec$\left((j, \LA, ct,\encd), \secas\right)$:}\\[2pt]
    Parse $\LA$ as $(\wlabel^{w_0}, \ldots, \wlabel^{w_m})$\\
    $(\encs,\ench) \gets \encd $\\
    $z\gets \skedec(\wlabel^{w_0} \oplus \secas, \encs)$\\
    If $z = \bot$ then \\
    \mytab $\ench' \gets \hmac_{\secas}(j\|\wlabel^{w_0})$\\
    \mytab If $\ench' \ne \ench$ then \; Return $\bot$ \\
    \mytab Else \; Return $\false$\\
    $(j', \payloadk, \eir, d', h) \gets z$\\
    If $j\ne j'$  then \; Return $\bot$\\
    $y \gets \De\left(d', (\wlabel^{w_1}, \ldots, \wlabel^{w_m})\right)$\\
    $g\gets \bot$\\
    For $i\gets 1$ to $m$ do\\
    \mytab If $y_i = 0\;$ then $\;g\gets g\concat\wlabel^{w_i}$ \\
    \mytab Else $\;g\gets g \concat (\wlabel^{w_i}\oplus\eir)$\\
    $h' \gets \H(g)$\\
    $(t, \payload) \gets \skedec(\payloadk, ct)$\\
    $t'\gets \textsf{CurrentTime}()$\\
    If $t'> t+\delay$ or $h \ne h'$  then\\ 
    $\mytab\return~\bot$\\
    $\return y, \payload$
  }
  
  \caption{Circuit generation and rule execution protocols for $\stap$.
    $\wlabel^{w_0}_1$ denotes the $\true$ label for the first output wire $w_0$,     $\wlabel^{w_0}_1 = \wlabel^{w_0}_0 \oplus \eir$;
    $\delay$ is a threshold parameter used to ensure the freshness of a
    trigger. \textsf{CktGarbling} is run by $\TC$
    asynchronous to the actual rule execution.  The remaining three functions
    are run by \TS, \TAP, and \AS during rule execution.
}
  \label{fig:algo}
\end{figure}

\subsection{Circuit Garbling (periodic, occurs on trusted client)}
Once the user creates a new rule,  \TC has to generate garbled circuit to enable
secure evaluation of the functions on the (untrusted) \TAP. \TC
generates garbled circuits corresponding to $\fone$ and $\ftwo$ and the
associated encoding/decoding blobs. It uses the encoding blob to obtain the
garbled labels for user-supplied constants. The
decoding blob allows \AS to decode the garbled outputs and to decrypt the payload. 
To ensure \TAP does not learn or tamper
with the decoding blob, \TC encrypts it using $\secas$.  \TC sends the garbled circuits,
encoded constants, and encrypted decoding blob to \TAP. 
\TC identifies each instance
of the garbled circuit using a monotonically increasing counter $j$.
The circuit id $j$ is initialized to zero if this is the first rule where the user uses the connected trigger API;
otherwise, \TC queries \TAP for the circuit id that the connected trigger API is currently using.
As garbled
circuits cannot be reused, \TC periodically repeats the above process.

Although \TC needs to transmit the garbled circuits and related information
prior to rule execution (\figref{fig:arch_overview}), we design \name such that
\TC does not need to be online during execution. \TC generates and transmits GCs
in batches at times when the smartphone is not being used (e.g., when charging
at night).  Our evaluation (\secref{sec:macro}) demonstrates that transmitting
sufficiently many garbled circuits for a day generally takes less bandwidth than
backing up a 1-minute Full HD video to a cloud drive. This achieves our design
principle of keeping the client device offline during rule execution.

Note that in our setting, the generator of the garbled circuit is the smartphone client --- a trusted entity. This is a key insight and design element that is possible due to the nature of our setting. This allows \name to use efficient semi-honest implementations of garbled circuits and achieve security in the presence of a malicious \TAP.

\paraheading{\em Cryptographic Details.\;}
Without loss of generality, we assume $\fone\Colon\zo^n\times\zo^n\to\zo$ and
$\ftwo\Colon\zo^n\times\zo^n\to\zo^m$. For notational simplicity, we denote
$f\Colon\zo^n\times\zo^{2n}\to\zo^{m+1}$, such that
$f(\dt, \const) = \fone(x, \cone)\|\ftwo(x, \ctwo), \const=(\cone, \ctwo)
\in\zo^{2n}$. Additionally, let $\H: \zo^*\to\zo^{\kappa}$ denote a
cryptographic hash function. The pseudocode of the circuit garbling is given by
the $\textsf{CktGarbling}$ function in~\figref{fig:algo}.

\paraheading{Encoding blob.}
The encoding blob contains the information required to encode the trigger data and encrypt
the trigger payload. It can be derived from the key $\sects$ and the
garbled circuit id $j$.  \TC 
generates three bitstrings $(\eis, \eir, \payloadk) \in \zo^{3\keylen}, $ using the
hash of $\sects \concat j$. 
The $\false$ labels of the input wires (as
described below) are generated using a $\H$ with $\eis$ as the random seed, and
$\eir$ is used as a global offset for the standard free-XOR
optimization~\cite{kolesnikov2008improved}. The least significant bit of $\eir$ is set to $1$
to enable the standard point-and-permute optimization~\cite{BMR90, zahur2015two}. Thus $\ei=(\eis, \eir)$ constitutes the encoding
information used for the garbling scheme's encoding function
(\En). 
The key $\payloadk$ is used to protect the payload
data~$\payload$. 

\paraheading{Garbled circuit.} To generate the garbled circuit $F$ for function
$f$, the labels for every input wire $w$ are computed as
$\wlabel^w_0 = \H(\eis\| w)$ and $\wlabel^w_1 = \wlabel^w_0\oplus \eir$
(assuming wire index $w$ is a fixed-length bitstring).  The rest of the
computation (generating labels of the non-input wires and garbling gates)
proceeds as per standard techniques with optimizations, such as
row-reduction~\cite{NPS99} or half-gate~\cite{zahur2015two}.



%

\paraheading{Encrypted decoding blob.}  The decoding blob consists of information
necessary for \AS to decode the labels of output wires (that correspond to the
action data $y$) and to decrypt the payload. Let the output wires be
$(w_0, w_1, \ldots, w_m)$, where $w_0$ corresponds to the output wire of
$\fone$, and the remaining $m$ wires correspond to those of $\ftwo$. Following
standard practice~\cite{BMR90}, the decoding information $d$ contains the least significant bits (lsb)
of the $\false$ label of each output wire
$\left(\lsb(\wlabel^{w_0}_0), \ldots, \lsb(\wlabel^{w_m}_0)\right)$. In
\name, decoding information is slightly modified. First, the first bit, $\lsb(\wlabel^{w_0}_0)$, of $d$ is 
dropped to create $d'$. 
Second, the hash of all the $\false$ labels of $\ftwo$'s output wires
$h \gets \H(\wlabel^{w_1}_0 \concat \ldots \concat \wlabel^{w_m}_0)$ is computed. 
Third, a decoding blob is created using $d'$, $h$, the payload key $\payloadk$, the XOR offset $\eir$,
and the current circuit id $j$. 
Next, the whole blob is encrypted using a symmetric-key encryption scheme $\skeenc$
with a key derived from both $\secas$ (the secret key shared with \AS) and
$\wlabel_1^{w_0}$ (the $\true$ label of $\fone$'s output wire $w_0$) to obtain
$\encs \getsr \skeenc(L^{w_0}_1 \oplus \secas, (j, \payloadk, \eir, d',
h))$. Additionally, an HMAC~\cite{BellareHMAC} of the $\false$ label of predicate
$\fone$ is computed using $\secas$ as $\ench\gets\hmac_{\secas}(j\|\wlabel^{w_0}_0)$. 
We use $\encd$ to denote the tuple $(\encs,\ench)$. We explain the rationale
behind these changes in~\secref{sec:design-rationale}.


\paraheading{Encoded user constants.} Using the encoding information $\ei$, \TC
computes the labels for constants $c$ as
$C \gets \En(\ei,\const)$. 

To accommodate the above customization, we derandomize the garbling function
$\Gb$ to $\Gb'$ that takes an encoding information $e$ as an input and returns
the garbled circuit $F$, as well as the $\false$ labels of every output
wire. 
\TC sends $(j, F, C,\encd)$ to
\TAP and increments the circuit id $j$ by 1.

\subsection{Rule Execution (occurs on TAP; does not involve \TC)}
When new trigger data is available for a trigger API, \TS will garble the input
data and encrypt any payload data, using the encoding blob it computes from
$\sects$ and circuit id $j$ (which is initialized to 0 when the API is first
called). It then transmits the ciphertexts to \TAP, which will lookup any rules
that are connected to the trigger API (and user) and run the associated garbled
circuits. \TAP finally transmits the output of the evaluation (garbled action
data) and the encrypted decoding blob to the corresponding API in \AS, which can
decode to the plaintext result using $\secas$ (\figref{fig:arch_overview}).

\TS and \AS only perform simple encoding and decoding of data --- fixed  functionality independent of the trigger-action rule semantics, thus maintaining their RESTful nature. We believe that \TS and \AS are well-motivated to support these additional operations, in exchange for enhanced security. Indeed, current end-point services are concerned about the privacy of user data. For example, GMail recently removed their IFTTT triggers citing security and privacy concerns~\cite{gmail-ifttt-removal}. 

In our setting, the full evaluation of the garbled circuit is split between the untrusted \TAP that executes the circuit to produce garbled output labels and the semi-honest \AS that decodes the plaintext result from the labels. This, in combination with the trusted generator, allows \name to efficiently achieve the execution authenticity property of GCs using a hash function (Section~\ref{sec:design-rationale}), even when \TAP itself is malicious. 
We omit the standard OAuth steps that occur during execution, which the reader can refer to~\cite{ifttt-oauth} for details.

\paragraph{\em Cryptographic Details.}
\TS's operations in the rule execution phase is function $\tsexec$
in~\figref{fig:algo}.  \TS recomputes the encoding information $\ei=(\eis,\eir)$
and the payload key $\payloadk$ from $\sects$ and $j$.  It then encodes the
trigger data $\dt$ using the garbling scheme's
encoding function, producing $X\gets \En(\ei,\dt)$, and encrypts the payload
$\payload$ under a symmetric-key encryption scheme with the key $\payloadk$ to
compute $ct\getsr\skeenc(\payloadk, (t, \payload))$ where $\ts$ is the current
timestamp. Finally, \TS forwards the message $(j, X, ct)$ to TAP and increments
$j$ by 1.

Upon receiving a trigger message $(j, X, ct)$, \TAP retrieves the corresponding
garbled circuit $F$, garbled constants $C$, and the encrypted decoding blob
$\encd$ using the trigger API and the circuit id $j$. Next, \TAP evaluates $F$
to obtain the garbled action data $Y\gets \Ev(F,(X,C))$ and forwards the tuple
$(j, Y, ct, \encd)$ to \AS. Function \textsf{TAPExec} in \figref{fig:algo}
depicts this process.

After receiving a message from \TAP, \AS decrypts $\encd$ to obtain the decoding
information, which will succeed only when $\fone$ evaluates to $\true$ (i.e.
$\wlabel^{w_0} = \wlabel_1^{w_0}$). If \AS is able to decrypt the decoding blob,
it uses $(d', \payloadk)$ to obtain the final output
$(\ftwo(x,\ctwo), \payload)$ in plaintext.  \AS would terminate if the message
from \TAP is malformed (i.e., hash of labels is inconsistent or decryption
fails) or stale (i.e., trigger timestamp is old). The function \textsf{ASExec}
in \figref{fig:algo} depicts this process.



\subsection{\change{Rationale for Novel GC Protocol \& Security Analysis}}
\label{sec:design-rationale}
\change{\name adopts a customized GC-based protocol tailored to the needs of trigger-action platforms. This protocol is novel in the following ways: (1) By leveraging the structure and threat model of trigger-action systems, we can use efficient semi-honest implementations of GCs to obtain security against a malicious evaluator; (2) \name supports fixed-length string operations including matching, extraction, and replacement --- common operations in trigger-action programs --- using Boolean circuits only; (3) \name contributes an efficient technique to ensure authenticity on the evaluator's output (i.e., \TAP) that requires only two hashes instead of the existing standard approach that requires hashes for $\true$ and $\false$ labels for every output wire. 
}


  Our setting has four parties: \TC generates the garbled circuit via $\Gb'$
  and then, both \TC and \TS use $\En(e,\cdot)$ to encode their respective
  inputs.   On the other hand,
  \TAP evaluates the garbled circuit using $\Ev(F,\cdot)$ while \AS decodes the
  plaintext output using $\De(d',\cdot)$. \change{Thus, \TC and \TS jointly act as the ``generator'', and \TAP and \AS jointly
  emulate the role of the ``evaluator'' of a two-party computation setting.}  The evaluators (\TAP and \AS) in our
  setting do not have any private input, therefore, 
  \name does not require any oblivious transfers. 
  Trust assumptions of the constituent parties of the generators and
  evaluators are asymmetric. Among the generators, \TC is fully trusted and
   \TS is semi-honest; among the evaluators, \AS is semi-honest and 
  \TAP is fully malicious. \change{Recall, \TS and \AS do not collude with TAP.} (See~\secref{sec:threat} for the motivations behind these trust assumptions.)

Next, we highlight the changes we introduce in two-party GC protocol and the
rationale behind those changes. We formally prove all security properties of our protocol in~\appref{sec:sec_analysis}.

\noindent \textbf{(1)}    \TC generates the encoding information deterministically from the shared
   secret key $\sects$ and the circuit id $j$, so that $\TS$ can also generate
   it without any communication with \TC during rule execution.  This
   achieves our design goal of ensuring that \TC can be offline during rule
   execution.  We note that this change does not violate the input privacy
   guarantees of the GC (see \thref{thm:TAP1}, \ref{thm:TS},
   \ref{thm:AS1}, and \ref{thm:AS2}).

\noindent \textbf{(2)}
   Recall that the decoding blob (which contains information to decode garbled action data and to decrypt payload)
   is 
   encrypted using the bit-wise XOR of $\secas$ and $\wlabel_1^{w_0}$ as the
   key. Thus, \TAP cannot learn the decoding blob (it does not have $\secas$,
   \thref{thm:TAP1}). Only \AS can successfully decrypt $\encs$ if it gets the
   $\true$ label of the output wire of $\fone$, $\wlabel_1^{w_0}$, from \TAP; which 
   can happen only when the predicate $\fone(x)$ evaluates to true. This
   meets our privacy requirement that \AS should not learn $\ftwo(x,\ctwo)$ or
   $\dtp$ when $\fone(x,\cone)=\false$. We formally prove this
   in~\thref{thm:AS1} and \ref{thm:AS2}.

\noindent \textbf{(3)} \name ensures that the malicious TAP (evaluator) cannot tamper with the
   results of evaluation. To achieve this we add the following information to the decoding blob: 
   $h=\H(\wlabel^{w_1}_0 \concat \ldots \concat \wlabel^{w_m}_0)$, the XOR
   offset $\eir$, and $\H(\wlabel^{w_0}_0)$. Standard techniques to achieve this
   property require the hashes of both $\true$ and $\false$ labels for every
   output wire~\cite{zahur2015two}. However, in \name, \AS does not have access
   to the circuit $F$ and the garbled inputs $(X,C)$. This makes it safe to
   disclose $\eir$ to \AS (\thref{thm:AS1}, \ref{thm:AS2}). Thus, \AS can
   compute $\wlabel^{w_1}_0, \ldots, \wlabel^{w_m}_0$ from the output labels
   (see \textsf{ASExec} in~\figref{fig:algo}) and check whether \TAP has
   returned forged labels for the output wires corresponding to $\ftwo$. The
   HMAC $\ench$ is used to ensure the authenticity of the first output wire
   corresponding to $\fone$, when it evaluates to $\false$. Because of this
   structure, \name achieves efficient authenticity verification with two hash
   values (\thref{thm:TAP2}). \change{This modification, combined with
   trusted generator, allows us to use efficient semi-honest
   implementations of GCs while achieving security against a
   malicious evaluator (\TAP).}

\noindent\change{ \textbf{(4)} We use a circuit id $j$ to synchronize between different parties (\TS,
  \TAP, \AS) so that they evaluate the correct circuit. Malicious \TAP can
  observe the circuit id (in plaintext) and can tamper with it. \name ensures that the \AS will always be able to catch a lying \TAP,
  and will never act on an incorrect circuit id $j$. (See the proof in~\apref{sec:sec_analysis}.)  Metadata leaked due to
  learning $j$ is outside the scope of this paper (Security Non-goals in \secref{sec:threat}). We discuss a potential
  solution in~\secref{sec:discussion}.}



\subsection{\change{Supporting \TAP-Specific Operations with Garbled Circuits}}\label{sec:boolfunc}
\change{While in theory any arbitrary function can be converted into Boolean circuits, and therefore can be computed using GCs, in practice they can be expensive. Via an analysis of existing real-world rules (\secref{sec:supported-functions}), we found that they involve well-defined and relatively simple Boolean and arithmetic operations --- these are well-studied and efficiently supported by existing GC libraries. }

\change{However, we also found that many rules use string operations, 
such as matching regular expressions and extracting or replacing substrings. The corresponding Boolean circuits of these operations, unless properly designed, will be inefficient to execute using GC~\cite{mohassel2012efficient}. 
\name computes these string operations by first translating regular expressions into deterministic finite automatons (DFA) and then applying a novel approach to convert DFA to Boolean circuits that can be efficiently evaluated using GC and can be easily extended for substring extraction and replacement. We next describe how \name utilizes this approach to perform regular expression matching. Details on substring extraction and replacement are given in~\appref{app:regex}. (Please refer to~\cite{chang1992regular} for details of how to convert a regular expression into a DFA.)
}

\paragraph{Input and output representations.} \change{
First, to avoid leaking the length of the string, every string field in the trigger data (and the action data) is padded to a fixed length bitstring. AS is responsible for removing the padding as necessary. 
The string is encoded into a fixed-length bitstring $\vecx = (x_1,\ldots,x_n)$ where $x_i\in\zo$ before feeding into the encoding function $\En$. 
Let the operation of the string be defined using the DFA $\dfa$, which is represented as a five-tuple,
$\dfa = (\states, \alphabet, \trans, \state_0, \finalstates)$, where $\states$
is the set of states, $\alphabet$ is the set of alphabets, $\state_0$ is the
initial state, and $\finalstates$ is the set of final states. The transition function
$\trans$ takes a state and an alphabet and returns the next state; therefore,
$\trans\Colon\states\times\alphabet\to\states$.  Since every string is a
bitstring, we have $\alphabet=\zo$. Let $\nstates = |\states|$ be the total number of states. Without loss of generality, we assume $\states = \mathbb{Z}_q  = \{1, \dots, q\}$.
}

\change{
Let $\transstring$ be the aggregated transition function that takes the entire string $\vecx$ as input and outputs the
final state of the DFA, 
\bnm \transstring(\vecx) = \trans(\ldots\trans(\trans(\state_0,x_1),x_2),\ldots, x_n).  \enm
If $\transstring(\vecx)\in\finalstates$, then
$\vecx$ is accepted by the DFA, which means that the string matches the regular
expression.
}

\paragraph{Converting DFAs into circuits.} \change{
The main goal is to convert the transition function $t = \trans(\state, x)$ into a Boolean circuit that uses as few AND and OR gates as possible, to take  advantage of the standard free XOR optimization~\cite{kolesnikov2008improved}.
}

\change{
Since both the states $\state$ and $t$ are integers between 1 and $q$, one can choose to represent each state using $\log_2 q$ bits and find the truth table for $\trans$. However, the resulting circuit would be hard to construct and minimize automatically.
Instead, 
we encode each state as a bit-vector of size $\nstates$ using one-hot encoding. 
We use $\State$ to denote the encoding of a state $\state \in \states$, and $\State^i$ represents the $i$-the bit of $\State$, where $\State^i = 1$ if $i=s$ and 0 otherwise. We can observe that when $S^i = 1$ and $x = 0$, $T^j=1$ if and only if $\trans(i, 0) = j$ holds; Similarly, when $S^i = 1$  and $x = 1$, $T^j=1$  if and only if $\trans(i, 1) = j$.
Therefore, the output of the DFA becomes %
\bnm \Transstring(\vecx) = \Trans(\ldots\Trans(\Trans(\State_0,
x_1),x_2),\ldots, x_n),  \enm
where $\Trans$ is the transition function that operates on the one-hot encoded states.
}

\change{
  To represent the transition function $\Trans$ as a Boolean circuit,  
we first define two sets for each state $s$, $P_0^\state$ and $P_1^\state$, where 
 $P_b^s = \{i\suchthat\trans(i,b) = j\}$ for $b\in\zo$.
It holds that $T^j=1$ if and only if either $x=1$ and $\exists i \in P_1^j, S^i=1$, or  $x=0$ and $\exists i \in P_0^j, S^i=1$. That is to say,  for  $1\le j\le q$,
\begin{align*}
\setlength{\abovedisplayskip}{4pt}%
\setlength{\belowdisplayskip}{4pt}%
\setlength{\abovedisplayshortskip}{6pt}%
\setlength{\belowdisplayshortskip}{6pt}
T^j  &= (x\wedge\bigvee_{i\in P^j_1}S^i) \vee (\neg x\wedge\bigvee_{i\in P^j_0}S^i) \\ 
&= (x\wedge\bigvee_{i\in P^j_1}S^i) \oplus (\neg x\wedge\bigvee_{i\in P^j_0}S^i). 
\end{align*}
Because only one of the $S^i$ will be 1 at any time, therefore the inner OR gates can also be replaced with XOR:
\bnm T^j  = (x\wedge\bigoplus_{i\in P^j_1}S^i) \oplus (\neg x\wedge\bigoplus_{i\in P^j_0}S^i).\enm 
Note the above expression can be further simplified using the Boolean algebra property 
$(x\wedge a) \oplus (\neg x\wedge b)= ((a \oplus b) \wedge x) \oplus a$.
Therefore, each bit in $T$ requires at most one AND gate to compute. To run $\dfa$ over a string of length $n$, we need to apply transition function ($\Trans$) $n$ times, and thus the resulting circuit contains at most $n\nstates$~AND gates.
Finally, to check if the final state is accepted by $\dfa$, simply computing $\bigoplus\limits_{j\in\finalstates}S_n^j$ is sufficient. 
}

\change{
We can observe that the size of the entire garbled circuit is $O(n\nstates\keylen)$, on par with the communication cost of the state-of-the-art non-GC based customized approach~\cite{mohassel2012efficient}. However, being purely circuit-based, our approach allows functional conjugation with other operations and retains the same security properties of standard GC. 
We describe more details on how to
extend this approach to perform substring extraction and replacement
in~\appref{app:regex}.
}

\paragraph{Supported functions.} 
By incorporating the above techniques, we can use garbled circuit to efficiently compute 
\change{common arithmetic operations, string operations,  
  and dictionary lookup, which cover}
 all but three functions listed in~\figref{fig:functions}. We sketch
the implementation details for each supported function
in~\appref{app:func-impl}. Based on our analysis in~\secref{sec:supported-functions}, this set of operations enables \name to
support $93.4\%$ of the function-dependent rules published on Zapier and \emph{all} of the~500 most popular rules on IFTTT.

It is possible to convert the remaining three unsupported functions (\texttt{format}, \texttt{strip\_html}, and \texttt{html2markdown}) to Boolean
circuits, as well, but the resulting circuits will be very large (for example, we
need to build a full-blown parser to find HTML tags) and inefficient to evaluate. These functions
are only used for formatting and do not require any sensitive user input. Thus, it is safe to run them on \AS or \TS directly with minor modifications to their APIs.



\section{Evaluation of \stap}
\label{sec:eval}
\newcommand{\ptap}{\textsf{PlainTAP}\xspace}

We prototyped \stap and showed that it is competitive in
performance with TAPs that do not provide any data
privacy.
We implemented the garbled circuit protocols described in~\secref{sec:design}
using EMP toolkit~\cite{emp-toolkit}, a C++ library for multi-party
computation. We build on EMP toolkit's semi-honest 2PC protocol. We
use state-of-the-art optimizations (including free
XOR~\cite{kolesnikov2008improved} and half gates~\cite{zahur2015two}) for
improving efficiency and bandwidth. The security parameter is 
$\keylen = 128$.  For other cryptographic operations we use
Cryptography.io~\cite{crypto:io}.  We use SHAKE-128 (a member of SHA-3 family~\cite{sha3standard}) as a cryptographic hash
function, and AES in CBC mode with HMAC using SHA-256 as a semantically secure,
non-malleable, robust symmetric-key encryption scheme.  To convert regular
expressions into DFAs we use the library dk.brics.automaton~\cite{dfa-lib}. For all experiments, we used \textsf{n1}-standard instances in Google Cloud Platform configured with 2 vCPUs, 7.5~GB memory, and 1~Gbps network
connection.







\subsection{Performance of Basic Operations}
\label{sec:micro}

\name supports Boolean, (integer) arithmetic, and string operations (which is
sufficient to run most of the rules in Zapier and IFTTT).  To evaluate the
performance of these basic operations, we picked a set of representative
operations from~\figref{fig:functions}. For Boolean, we chose the AND operation
since our circuits only contain AND and XOR gates, and the XOR gate can be
computed without any encryption costs~\cite{kolesnikov2008improved}. For numeric
data, we selected comparison and multiplication between two 32-bit integers.
For string operations, we divided them into two categories: operations that need
regular expressions (\texttt{contain}, \texttt{replace}, \texttt{split}, and
\texttt{extract\_phone}) and those that do not (\texttt{lookup} and
\texttt{==}). We set the input \texttt{x} as a 100-character (800 bit) string,
except for \texttt{lookup}, where we set \texttt{x} to a 10-character string. In
the function {\tt m.lookup(x)}, we set {\tt m} to be a key-value store with 10
entries, where each key and each value is 10-characters long. For
\texttt{x.replace(s, "")} and {\tt x.contain(s)}, we set the \texttt{s} to a
4-character string. For {\tt x.split(d, 0)}, we set {\tt d} to be a single
character.

While measuring the costs for above basic garbled circuit operations, we do not consider the overhead of other components like payload encryption, as they are independent of the operation. \figref{fig:micro} shows the time required for each operation.

\begin{figure}[t]
\centering\tabfontsize
\begin{tabular} {p{1em}p{8em}rrrrrr} 
 \toprule
  \multicolumn{2}{c}{\multirow{2}{*}{\textbf{Operation}}} &  \multicolumn{4}{c}{\textbf{Computation time} (ms)} & \textbf{GC size} &  \textbf{\# DFA} \\ 
 & & {Client} & {TS} & {TAP} & {AS} & (KB) &\textbf{states} \\
 \midrule
Bool & \texttt{x \& y} & 4.0 & 3.7 & 3.7 & 3.9 & 0.03 & -- \\
\cmidrule{1-8}
\multirow{2}{*}{Num} & \texttt{x > n} & 4.0 & 3.9 & 3.8 & 3.8 & 0.96 & --\\
& \texttt{x * n} & 4.0 & 3.7 & 4.0 & 3.7 & 31 & --\\
\cmidrule{1-8}
\multirow{6}{*}{Str} & \texttt{x == t} & 4.0 & 3.7 & 4.0 &  3.8 & 25 & --\\
 & \texttt{m.lookup(x)} & 4.2 & 3.6 & 4.1 & 3.8 &  31& -- \\
 & \texttt{x.split(d,0)} & 5.7 & 3.7 & 5.3 & 4.1 & 78 & 16\\
& \texttt{x.contain(s)} & 7.8 & 3.9 & 7.4 & 3.9 & 123 & 47 \\
& \texttt{x.replace(s,"")} & 10.7 & 3.8 & 10.5 & 4.6 & 278 & 40 \\
&  \texttt{x.extract\_phone()} & 24.7 & 3.6 & 25.5 & 4.1 & 2191 & 108 \\
\bottomrule
\end{tabular}
\caption[]{Execution time of different basic operations at the
  client (TC), the trigger service (TS), the action service (AS), and the TAP. We record the
  size of the garbled circuit sent from TC to TAP and the number of states in the DFA if applicable.
}
\label{fig:micro}
\end{figure}

The circuit generation (at \TC) and circuit evaluation (at TAP) take roughly the same amount of time for each operation, which is expected because they require roughly similar operations. Most of the Boolean, arithmetic,
and some string operations (such as string equality or lookup) execute in less than 4 ms on the \TAP.  Complex string operations are also fast (takes less
than 25 ms) under some reasonably sized inputs.  TS and AS can encode/decode inputs in less than 5~ms.


We record the size of the garbled circuit ($|F|$) for each operation in
the second-to-last column of~\figref{fig:micro}.  The garbled circuit
$F$ needs to be periodically transferred from the client to TAP and the size of
the circuit changes significantly for different operations.  Although for Boolean
AND the circuit is only 31 bytes, the circuit size for a complex regular expression
extraction, which is one of the most expensive operations we found, is quite large
(2.2~MB).  The size of garbled circuit
increases with the number of states in the DFA and the length of the input
string. 
The string replacement circuits (\texttt{replace}) are larger, about 2.25x, than their equivalent
matching circuits (\texttt{contain}), even though the required DFA is larger for the latter operation. The lookup circuit is
small ($31$~KB). 




\subsection{Performance of Running Complete Rules}
\label{sec:macro}
Next, we measure the performance of \stap on real-world rules.
We first picked ten rules from the combined IFTTT and Zapier dataset we
collected in~\secref{sec:supported-functions}. These rules handle sensitive data of different sizes and cover a wide variety of operations (as noted~\figref{fig:functions}). 
We list the rules with simple descriptions 
in~\figref{fig:rules-exp}. The first eight rules (R1-R8) involve frequently used functions, while the last two rules represent two rare but extreme
scenarios. R9 requires a rarely-used \texttt{extract\_phone} function, which appears only three times in our dataset and requires a complex regular
expression to be evaluated over a long text, thus making it the most expensive
rule to compute in our
dataset. 
R10 is connected to a trigger that might have a large payload (videos), so
its performance is more dependent on network bandwidth and latency.


For comparison, we built a skeleton version of each service following the
current TAP model, where only plaintext data is exchanged and computed, as a
baseline. We refer to this as $\ptap$.
We used Python library Flask for the cloud component of TAP, as well as
two RESTful servers that mimic the APIs provided in current trigger and action
services. \change{Two US-west
instances hosting TS and AS, and two US-central instances for hosting TAP and the (simulated) TC. The network latency between US-west
and US-central is $39$~ms.}


\paragraph{Latency.} The end-to-end execution latency measures the time between a
trigger event (trigger data and payload are available to \TS) and \AS receiving plaintext output (\figref{fig:throughput}). The latency, except R9 and R10, is below 260~ms.  When compared to \ptap (\figref{fig:throughput}, top),
the execution latency for \stap is $55\%$ more on average. The majority of the latency
overhead is due to the higher amount of data transfer in \name between \TS and
\TAP ($27$-$51$~KB) and between \TAP and \AS ($3$-$32$~KB), which is nearly $128$x more
than what it would require in \ptap. We show the data transfer in the last two
columns in~\figref{fig:rules-exp}.  Given that \TS, \AS, and \TAP are cloud-based
services with high-bandwidth network links, the increase in data
usage is reasonable.
\change{In addition, we list the time spent by \TC to generate and upload a single circuit (as the red bar in \figref{fig:throughput}, top).
  \TC needs less than $12$~ms to generate and transfer one circuit for most rules (except for R9, in which case it takes $172$~ms).  
This metric  represents the setup time for a new rule before it can be executed.
In practice, \TC can generate and upload circuits in bulk periodically at its convenience. 
}

%

\pgfplotstableread{
rule
1
2
3
4
5
6
7
8
9
}\datatable
\pgfplotstableread[col sep=comma]{data/latency_updated.txt}\latencydata
\begin{figure}[t]
  \centering

  \begin{tikzpicture}
    \begin{groupplot}[
      group style = {group size = 2 by 1, , horizontal sep=0.5cm},
      ]
      \nextgroupplot[xmin = 0, xmax = 27, 
      ybar stacked,
      legend style={
        legend columns=3,
        at={(xticklabel cs:0.55)},
        anchor=north,
        draw=none,
        font=\footnotesize,
        yshift=-2ex,
    },  
    xtick=data,
    bar width=1.5mm,
    ymin=0,
    ymax=0.6,
    width=7.5cm,
    height=4cm,
    ytick distance={0.2},
    minor tick num=3,
    xticklabels from table={\datatable}{rule},
    y tick label style={font=\footnotesize},
    x tick label style={font=\footnotesize, xshift=+0.5ex},
    tick label style={font=\footnotesize},
    label style={font=\footnotesize},
    xlabel={rule \#},
    ylabel={Latency (sec)},
    area legend]
    \addplot+ [pattern=north east lines, pattern color=blue] table[x=idx,y=1] {\latencydata};
    \addlegendentry[]{\name execution~~~};
    \addplot+ [pattern=north east lines, pattern color=red, dotted] table[x=idx,y=3] {\latencydata};
    \addlegendentry[]{\name setup~~~};
    \addplot+ [brown, pattern=north west lines, pattern color=brown] table[x=idx,y=2] {\latencydata};
    \addlegendentry{\ptap execution};

    \nextgroupplot[xmin = 27, xmax=30,
    ybar stacked,
    bar width=1.5mm,
    ymin=0, ymax=3,
    width=2.25cm,
    height=4cm,
    ytick distance={1},
    minor y tick num=3,
    xtick={28,29},
    xticklabels={10},
    x tick label style={font=\footnotesize, xshift=+0.5ex},
    tick label style={font=\footnotesize},
    label style={font=\footnotesize},
    ]
    \addplot+ [pattern=north east lines, pattern color=blue] table[x=idx,y=1] {\latencydata};
    \addplot+ [pattern=north east lines, pattern color=red, dotted] table[x=idx,y=3] {\latencydata};
    \addplot+ [brown, pattern=north west lines, pattern color=brown] table[x=idx,y=2] {\latencydata};
    \legend{}
  \end{groupplot}

\end{tikzpicture}
\begin{tikzpicture}
  \pgfplotstableread[col sep = comma]{data/throughput.txt} \throughputdata

\begin{groupplot}[
     group style = {group size = 2 by 1, , horizontal sep=0.5cm},
    ]
    \nextgroupplot[xmin = 0, xmax = 27, 
    ybar stacked,
     legend style={ legend columns=2,
                    at={(xticklabel cs:0.5)},
                    anchor=north,
                    draw=none},  
     xtick=data,
     bar width=1.5mm,
     ymin=0,
     ymax=230,
     width=7.5cm,
     height=4cm,
     xticklabels from table={\datatable}{rule},
     y tick label style={font=\footnotesize},
     x tick label style={font=\footnotesize, xshift=+0.5ex},
     tick label style={font=\footnotesize},
     legend style={font=\footnotesize,yshift=-2ex},
     label style={font=\footnotesize},
     xlabel={rule \#},
     ylabel={Throughput (exec/sec)},
     area legend]
      \addplot+ [pattern=north east lines, pattern color=blue, x tick label style={xshift=-0.3cm}] table[x=idx,y=Throughput] {\throughputdata};
        \addlegendentry[]{\name~~~~};
        \addplot+ [brown, pattern=north west lines, pattern color=brown, x tick label style={xshift=-0.3cm}] table[x=idx,y=Throughput IFTTT] {\throughputdata};
        \addlegendentry{\ptap};
        
    \nextgroupplot[xmin = 27, xmax=30,
    ybar stacked,
     bar width=1.5mm,
     ymin=0, ymax=8,
     width=2.25cm,
     height=4cm,
    xtick={28,29},
     xticklabels={10},
     x tick label style={font=\footnotesize, xshift=+0.5ex},
     tick label style={font=\footnotesize},
     label style={font=\footnotesize},
     area legend]
      \addplot+ [pattern=north east lines, pattern color=blue, x tick label style={xshift=-0.3cm}] table[x=idx,y=Throughput] {\throughputdata};
        \addplot+ [brown, pattern=north west lines, pattern color=brown, x tick label style={xshift=-0.3cm}] table[x=idx,y=Throughput IFTTT] {\throughputdata};
  \end{groupplot}
\end{tikzpicture}

\caption{Latency (top) and throughput (bottom) for running each of the rules (X-axis) in \stap and \ptap.}
\label{fig:throughput}
\end{figure}

\paragraph{Throughput.} We measured the throughput as the maximum number of
executions per second by \stap. We used Apache Bench~\cite{apache-ab} to compute the
throughput, which simulates sending concurrent trigger messages to \stap. We
pre-computed the trigger labels to eliminate the bottleneck
on TS. We gradually increased the concurrency level until the throughput
saturated. We reported the maximum throughput of \stap and $\ptap$
in~\figref{fig:throughput} (bottom). \stap is capable of executing 65-90 rules of type
R1-R8 per second on a single server. 
Compared to \ptap, for all but one (R9) rule, \stap provides around $41\%$
throughput of \ptap. In the worst case, when executing R9, \stap's throughput reduces to $11\%$ of \ptap.

\subsection{Large-scale Evaluation}\label{subsec:eval-large}
\change{
To better characterize the performance of \name under realistic workloads, we performed a large-scale evaluation where we randomly sampled $100$ rules from our combined IFTTT and Zapier dataset. Out of the $100$ sampled rules, $55$ require computations on the trigger data. For rules with no computations, we simply treated the trigger data as payload and encrypted them inside $ct$. 
}

\paragraph{Computation overhead on TC.}
\change{In \name, the trusted client \TC has to
periodically generate and distribute the garbled circuits $F$, associated garbled constants $C$, and encrypted decoding blobs $\encd$ to
\TAP. For simplicity, we will use the term garbled circuit to denote the set $(F,C, \encd)$.
On average it takes $4.1$~ms to generate one garbled circuit.  Based on
prior work~\cite{cobb2020risky}, we assume that an average user has $26$ rules
installed and that each rule will be executed once every $15$~minutes, which is
the default interval used by IFTTT to contact its trigger services~\cite{mi2017empirical}. Therefore, we estimate that the \TC of an average user needs to spend $10.2$ seconds per day to
generate $2,496$ circuits.  Since the average circuit size is $25.3$~KB, the
estimated amount of data that \TC has to send to TAP per day is $61.7$~MB, which
is less than the data required to back up $25$ high-res photos or a 1-minute HD video
(1920x1080 px @30 fps) to a cloud service~\cite{cnet-apple-photo, lifewire-apple}, a common task
executed daily by modern smartphones.}

\paragraph{Storage overhead.}
\change{
\TAP needs to store all circuits uploaded by \TC until they are executed.
Based on the dataset in~\secref{sec:supported-functions},
there are $12.4$~million rules (counted by number of installations) running in IFTTT that are connected to private triggers. If we assume, conservatively, that all of them require computations, that each rule will be executed once every 15 minutes, and that each rule on average requires $25.3$~KB of storage per execution based on the sampled rule set, the total storage overhead for using \name would be $28$~TB. Given that cloud storage is inexpensive~\cite{ebs-pricing}, the
overhead is manageable.
}
\change{  \name introduces little storage overhead to \AS, \TS, and
  \TC. \TS and \AS only need to store a $16$-byte key ($\sects$ and $\secas$) and the current circuit id $j$
  ($4$~bytes) for each user. 
  \TC needs
  to store the circuit id $j$ and the keys for each service connected to the
  user's installed rules, since it can delete the circuits it generated after
  uploading them to TAP.  }

\paragraph{Latency and throughput.}
\change{We first measured the end-to-end latency of running each rule individually and computed the average. The average latency of \name
  is $139$~ms, which is similar to $\ptap$ ($110$~ms). The increase in latency should be tolerable, considering the delays in current trigger-action systems are usually 1 to 2 minutes \cite{mi2017empirical}. Then, we issued concurrent requests to trigger every rule at the same time and recorded the maximum throughput. The throughput of \name is $96$~requests per second (RPS),
  which is $45\%$ of the throughput of \ptap ($211$~RPS). Overall, we have shown that \stap can run real rules with a modest performance impact. 
}

\section{Related work}\label{sec:related}
A few studies have investigated the security issues in IFTTT-like systems. Most closely related is the work of Fernandes et al.~\cite{fernandes2018decentralized} where they first introduce the compromised TAP model, and then built DTAP, a system to prevent the misuse of stolen OAuth tokens. They focus only on the integrity problem. By contrast, our work subsumes DTAP by providing confidentiality to the private trigger data passing through TAPs and  adding authenticity of trigger-compute-action rule execution.

\change{Chiang et al.~\cite{otap} recently propose Obfuscated TAP that handles metadata attacks. They propose techniques to hide trigger data arrival patterns and the types of trigger and action services from the untrusted TAP. Their work also performs end-to-end encryption of trigger data but cannot support computations. In contrast, \name focuses on protecting sensitive trigger data while allowing computation --- a common use-case in real-world rules (e.g., filter codes in IFTTT). }


Bastys et al.~\cite{DBLP:conf/ccs/BastysBS18} classify the sensitivity of IFTTT's trigger and action services and show that $30\%$ of IFTTT's apps may violate privacy by exfiltrating private information to a third-party. Xu et al. ~\cite{DBLP:journals/access/XuZZCDG19} analyze how much private data can be harvested by TAPs. They demonstrate that IFTTT has access to more data than necessary. For example, IFTTT monitors devices even if they do not trigger actions. This motivates our work in protecting all information from a malicious TAP. 

A popular line of work investigates the semantics of rules and how they violate security policies or interfere with each other. Surbatovich et al.~\cite{DBLP:conf/www/SurbatovichABDJ17} present an empirical study of IFTTT apps and categorize the apps with respect to potential security and integrity violations. Wang et al.~\cite{iruler} design iRuler that uses SMT techniques to discover inter-rule vulnerabilities. This work is orthogonal to ours as it deals with rule semantics and the TAP is considered trusted. By contrast, our work protects trigger data from a malicious TAP.

\paragraph{Cryptographic techniques for secure computation.} There is a large
body of work on privacy-preserving outsourced computation. Garbled circuit is a particularly popular approach
~\cite{DBLP:conf/icalp/KolesnikovS08,DBLP:conf/uss/KreuterSS12,DBLP:conf/uss/HuangEKM11,carter2016secure,songhori2015tinygarble}. However,
most practical approaches tend to be
application-dependent~\cite{DBLP:journals/jcs/KolesnikovS013,bringer2012faster}. Since
our setting differs from a generic multi-party setting (as discussed in Section \ref{sec:design}, we needed to
develop a customized protocol).

\change{
For evaluation of string operations in a secure two-party computation
setting, Mohassel et al.~\cite{mohassel2012efficient} introduced
Obliv-DFA, a custom non-GC based protocol that
only supports regular expression matching. Extending Obliv-DFA to substring extraction and replacements would not be possible without drastically changing the protocol and incurring significant overhead. 
\name, on the other hand, supports string operations through a novel and efficient purely circuit-based approach. This allows functional composition with other operations such as substring extraction/replacements and simple transfer of security properties of GC.
}


\section{Discussion and Limitations}
\label{sec:discussion}

\paragraph{Security against metadata leakage.} Some rules reveal
sensitive information just because they are executed. For example, consider the
rule: ``IF I leave home, THEN turn off the WiFi.'' TS sends a message
to AP only when the user leaves the home. In our threat model, TAP
knows the rule semantics. Therefore, when TAP observes a message from this particular TS, it will learn that the user has left the home. 
\change{Such metadata leakage from side-channels is hard to prevent cryptographically.
Recent work~\cite{DBLP:journals/access/XuZZCDG19,otap} has applied \emph{cover traffic} to protect time-sensitive information in trigger-action rules by hiding the real trigger events among fake-but-identical ones. We discuss a simple modification to \name that uses cover traffic without requiring \TC to generate new (fake) circuits.}

\change{
\TC generates a set of circuits and transmits them to \TAP, as before. Let $J$ denote circuit indices in this set. TS also internally keeps track of the set of circuit indices $J'$ that have been used with \emph{real} data. 
To send real data, \TS  picks random $j\in J\setminus J'$, updates $J'\gets J'\cup\{j\}$, and continues as before (\secref{sec:design}). To send fake data, it picks random $j \in J$, and then sets the garbled trigger data to random bits. \TAP executes the chosen circuit ID as before and sends output to \AS. When fake data is evaluated on a circuit, the decryption at the \AS will fail with very high probability and thus, it will ignore the message.
We present an example to illustrate this approach. Assume \TC generates two circuits with ids $J = \{j_1, j_2\}$ and \TS has to send five events $e_1, e_2, \ldots, e_5$, among which only $e_2$ and $e_3$ are real. 
Following the scheme above, it transmits the following sequence of circuit ids to \TAP: $j_1, j_1, j_2, j_1, j_2$. We see that $j_1$ is used multiple times: first for $e_1$, then for $e_2$, and finally for $e_4$. \TAP will notice that $j_1$ circuit was executed thrice, but it cannot distinguish which of these executions was on real data. 
}

\change{This approach is secure due to two reasons: (1) \TAP cannot distinguish between executions on real or fake data due to the garbling procedure 
  we use in \name~\cite{zahur2015two} (see~\propref{thm:input} for a proof-sketch); and (2) \TAP cannot learn anything from circuit usage statistics because of how \TS selects $j$. In addition, circuits can be executed multiple times but \emph{at most only one} of them will be on real data. 
Such re-evaluation of circuits on random data does not affect GC security properties \cite{bellare2012foundations}. }

\paragraph{Integrating with existing Trigger-Action Systems.}
We contribute a clean-slate redesign for
trigger-action platforms providing data confidentiality from the ground up. As
such, it is not immediately backward compatible. \change{However, \name's design attempts to minimize these required changes as follows:}

\change{First, we create a new $\TC$, a mobile app that users must install on
  their phones to interact with \name. The app mimics
  the user experience that trigger-action platforms like IFTTT or Zapier
  currently offer. For example, the user clicks on buttons in a wizard-style
  user interface to program a rule. \TC transparently generates keys and GCs in the background and shares them with \TS, \TAP, and \AS
  (accordingly) --- the user does not have to take any additional action.}

Second, the existing \TS and \AS need to adapt to \name
  protocol. Specifically, both need to communicate with \TC to receive keys
  ($\sects, \secas$). Additionally, \TS has to send encoded labels to \TAP
  instead of plaintext trigger data, and \AS has to run the decoding function on
  circuit output (\figref{fig:algo}). We have built a 
  library that
  trigger/action services can use to upgrade their APIs to perform the above
  operations. 

\change{Third, \TAP has to evaluate GCs. It also has to cache
  circuits it receives from \TC. We observe that \TAP is already setup to
  perform these tasks --- executing code at large scale and managing
  user-specific data. Although this incurs a resource cost, we believe that it
  is acceptable given the strong confidentiality and integrity guarantees our
  work provides.}

\paragraph{Rule semantics.} A malicious TAP can learn about a user's automation
patterns using its knowledge of rule semantics. Although we encrypt the
trigger data, \TAP can still observe the source endpoint of the trigger data and the destination of the encrypted result. As
future work, we envision using results from anonymity networks like
Tor~\cite{tor} to hide the sources (trigger service) and destinations (action
service) of messages. 

\paragraph{Circuit id synchronization.} \name requires \TC and \TS to synchronize on the circuit id $j$. 
\TAP in \name cannot execute a rule if the $j$ specified by  \TS is not present
in its database of GCs sent by \TC. This can happen, for example, if  \TC
fails to generate circuits for a certain day due to technical glitches, but
 \TS continues to generate trigger data.  We do not want \TS to support additional
APIs to inform  \TC about its current circuit id $j$. Instead, we can rely on
  \TAP to provide this information. 
 \TS attaches an encrypted (using $\sects$) blob containing the circuit id
and the timestamp to $\TAP$ along with other data during rule execution. 
$\TAP$ forwards that blob to \TC on request from \TC. Thus  \TC can learn the current value of  
$j$ and can detect if  \TAP sends a stale message.  



\paragraph{Loss of the trusted client (TC).} TC in our setting is the ``root'' of trust for generating garbled circuits.  TC can be an app
running on user's personal mobile device.  However, the app has to store a
number of important states necessary for continued execution of a rule, such as
$\sects, \secas, \mbox{OAuth tokens}, j, f, c$, etc. Therefore, the states on the trusted
client must be preserved in case the device is lost. We can use standard
cloud-based solutions to back up the states. For example, the states can be
encrypted under a user's password and backed up in a cloud drive.
The client can recover the states and continue to operate on a new device once the
user connects their cloud drive accounts.



\paragraph{Circuit usage feedback.} Different rules execute at varying rates. \TAP can monitor rule execution frequency to make predictions about future circuit usage and optimize the number of circuit generations and transmissions. \TAP can lie about these statistics; however, it does not affect on the security of \name. We leave its implementation to future work.


\section*{Acknowledgements}

We thank the anonymous reviewers for comments that helped improve
the paper.  This work was supported in part by the University of
Wisconsin-Madison Office of the Vice Chancellor for Research and Graduate
Education with funding from the Wisconsin Alumni Research Foundation.
This work was also partially supported by the Swedish Foundation for Strategic
Research (SSF) and the Swedish Research Council (VR).


\bibliographystyle{abbrv}
\bibliography{bib}

\appendix
\subsection{Security Analysis of \stap}
\label{sec:sec_analysis}

\begin{figure*}[t]
  \tabfontsize\centering
  \hfpagesss{0.315}{0.31}{0.31}{
    \underline{$\obliv$:}\\[2pt]
    $\left(f, (\dt^0, \const^0, \dtp^0), (\dt^1, \const^1, \dtp^1)\right)\getsr \adv $\\
    $\mbox{Pick } j$ ;\; $\sects\getsr \{0,1\}^\kappa$ ;\; $\secas\getsr \{0,1\}^\kappa$\\
    $b\getsr\{0,1\}$\\
    $j,F,C,\encd \getsr \cktgarbling\left((f,c^b), (\sects,\secas,j)\right)$\\
    $j,X,ct \getsr \tsexec\left((x^b,\payload^b),(\sects, j)
    \right)$\\
    $b'\getsr \adv(j,X,ct, F, C, \encd)$\\
    $\return b=b' $
   }{ 
    \underline{$\auth$:}\\[2pt]
    $\left(f, (\dt, \const, \dtp)\right)\getsr \adv $\\
    $\mbox{Pick } j$ ;\; $\sects\getsr \{0,1\}^\kappa$ ;\; $\secas\getsr \{0,1\}^\kappa$\\
    $j,F,C,\encd \getsr \cktgarbling\left((f,c),(j,\sects,\secas)\right)$\\
    $j,X,ct \getsr \tsexec\left((x,\payload),(\sects,j) \right)$\\ 
    $j', \Ytilde, ct', \encd'\getsr  \adv(j,X,ct, F, C, \encd)$\\
    $y' \gets \asexec\left((j',\Ytilde,ct',\encd'),\secas\right)$\\
    $\return (j', \Ytilde, ct', \encd') \ne (j, F(X), ct, \encd)$\\
    $\mytab\mytab \wedge y'\ne \bot $
  }    {
    \underline{$\privone$:}\\[2pt]
    $\left(f, (\dt^0, \const^0, \dtp^0), (\dt^1, \const^1, \dtp^1)\right)\getsr \adv $\\
    If $f(\dt^0, \const^0) \ne f(\dt^1, \const^1)$ then \ Return $\bot$\\
    $\mbox{Pick } j$ ;\; $\sects\getsr \{0,1\}^\kappa$ ;\;$\secas\getsr \{0,1\}^\kappa$\\
    $b\getsr\zo$\\
    $j,F,C,\encd \getsr \cktgarbling\left((f,c^b), (\sects,\secas,j)\right)$\\
    $j, X, ct \getsr \tsexec\left((x^b,\payload^b), (\sects, j)\right)$\\
    $j, Y, ct, \encd \gets \tapexec\left((j, X, ct), (F, C, \encd)\right)$\\
    $b'\getsr \adv(j,Y,ct,\encd)$\\
    $\return b=b' $
  }
  \caption{Security games for \name.}
  \label{fig:games}
\end{figure*}

In this section, we show that \stap meets the security goals outlined
in~\secref{sec:threat} by providing concrete security definitions and proofs.
We assume the adversaries are probabilistic polynomial time (ppt) ---
they 
run in time polynomial in security parameter $\keylen$. The garbled circuit protocol $\G$ used in \name provides \emph{output privacy}, \emph{message obliviousness}, and
\emph{execution authenticity}. The encryption scheme $\ske$ is
\textsf{IND-CCA} secure. We model the hash function $\H$ as a random
oracle~\cite{RO}. Let $\negl(\cdot)$ to be a negligible function. 

We prove the security of each component of \name, namely TAP, TS, and AS, separately. The
security games are defined in~\figref{fig:games}. 



\paragraph{\em Security against malicious TAP.}
Following our threat model, we assume the TAP is compromised and
\emph{malicious}. 
The security definitions we expect from \stap are as follows.

\paragraph{Obliviousness.}
We define the obliviousness property of $\stap$ by the security game $\obliv$ as
shown in~\figref{fig:games}.  Informally, $\advA$ despite arbitrarily deviating
from the protocol should not know anything about the user-provided constants
$\const$, the trigger data $\dt, \dtp$, and the output of the function
$y \gets f(x)$.

\begin{theorem}[TAP Obliviousness]
  For any ppt adversary $\advA$, the probability that $\advA$ wins the $\obliv$
  game is negligible. \bnm \Prob{\obliv = 1}\leq 1/2+\negl(\keylen)\,,\enm
  \label{thm:TAP1}
\end{theorem}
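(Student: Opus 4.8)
The plan is a standard game-hopping argument starting from $G_0 = \obliv$: I want to show that each of the six components $(j, X, ct, F, C, \encd)$ of $\advA$'s view is either literally independent of the challenge bit $b$ or computationally indistinguishable from something that is, and then close with the \emph{message obliviousness} of the garbling scheme $\G$. A key structural fact I would use is that in $\name$ the evaluators hold no private input, so no oblivious transfer is involved and the off-the-shelf obliviousness guarantee (rather than a full 2PC simulation) is the right target; another is that all wire labels, and hence $d'$, $h$, $\wlabel_0^{w_0}$, $\wlabel_1^{w_0}$, depend only on the garbling randomness and $f$, not on the input value, so the decoding blob $\encd$ is in fact $b$-independent.

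The first hop, $G_1$, removes the deterministic derivation of the encoding material: replace $\eis = \H(\sects\|j\|0)$, the random part of $\eir = \H(\sects\|j\|1)$, and $\payloadk = \H(\sects\|j\|2)$ by freshly sampled uniform strings. Since $\sects$ is a uniform $\keylen$-bit secret appearing nowhere else in $\obliv$, the only way to notice is to query the random oracle at a point with prefix $\sects\|j$, so $|\Pr[G_0]-\Pr[G_1]| \le 3 q_{\mathrm{ro}} 2^{-\keylen}$; afterwards $\Gb'$ is just $\Gb$ with its randomness pinned to fresh uniform $(\eis,\eir)$. Next I would strip $b$- and label-dependence out of $\encd$ and $ct$, ordered to sidestep a key-reuse issue: $\secas$ is used both as the $\hmac$ key for $\ench$ and, XOR-ed with $\wlabel_1^{w_0}$, as the encryption key for $\encs$. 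So $G_2$ replaces $\ench = \hmac_{\secas}(j\|\wlabel_0^{w_0})$ by a uniform string; since $\hmac$ is keyed by the uniform secret $\secas$ and is hash-based, modeling it like $\H$ as a random oracle gives loss $\le q_{\mathrm{ro}} 2^{-\keylen}$ (equivalently, this hop can be deferred until after $G_3$, at which point $\secas$ is no longer reused and PRF security of $\hmac$ applies cleanly). $G_3$ replaces $\encs = \skeenc(\wlabel_1^{w_0}\oplus\secas,(j,\payloadk,\eir,d',h))$ by $\skeenc(\wlabel_1^{w_0}\oplus\secas,\mathbf{0})$: after $G_2$, $\secas$ occurs only inside this key, so $\wlabel_1^{w_0}\oplus\secas$ is a uniform one-time key and a distinguisher is an IND-CPA adversary against $\ske$. $G_4$ likewise replaces $ct = \skeenc(\payloadk,(t,\payload^b))$ by $\skeenc(\payloadk,(t,\mathbf{0}))$, another IND-CPA step now that $\payloadk$ is used only as this key. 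Each of $G_3, G_4$ costs $\Adv^{\mathrm{cpa}}_{\ske}$.

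In $G_4$ the only remaining $b$-dependence in $\advA$'s view is the pair $(F,(X,C)) = (F,\En(e,(\dt^b,\const^b)))$ — the garbled circuit together with the garbled full input — since $j$, $ct$, $\encs$, $\ench$ are now freshly random and $b$-independent. I would reduce to the obliviousness game of $\G$: the reduction forwards $\big((f,(\dt^0,\const^0)),(f,(\dt^1,\const^1))\big)$ to its obv challenger (same $f$, equal-length inputs), receives $(F,W)$, parses $W$ as $(X,C)$, samples fresh $j$, $t$, $\payloadk$, and uniform strings to stand in for $\encs$ and $\ench$ (legitimate, as those keys/values are now uniform and independent of everything the challenger withholds), and hands $\advA$ the reconstructed view; the reduction's obv advantage equals $|\Pr[G_4] - \tfrac12|$, negligible by security of $\G$. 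Collecting the bounds yields $\Pr[\obliv = 1] \le \tfrac12 + 4 q_{\mathrm{ro}} 2^{-\keylen} + 2\,\Adv^{\mathrm{cpa}}_{\ske} + \Adv^{\mathrm{obv}}_{\G} = \tfrac12 + \negl(\keylen)$.

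The main obstacle is arranging the reduction so that the final step can actually invoke the off-the-shelf garbling guarantee: the obv challenger returns only $(F, W)$ and withholds the encoding/decoding information and all internal wire labels, so $\encs$ and $\ench$ must be made independent of those labels (and of $b$) by the earlier hops before the reduction is in a position to simulate them. The one delicate point there is the double use of $\secas$ as an $\hmac$ key and (masked by a label) as an $\ske$ key, which is why the hops touching $\ench$ and $\encs$ must be sequenced (or the hash-based $\hmac$ modeled as a random oracle) rather than reduced independently; the rest is routine.
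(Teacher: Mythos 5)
Your proposal is correct, and it reaches the same destination as the paper by the same basic ingredients (obliviousness of the garbling scheme of Zahur et al., security of $\ske$, and the random-oracle model for $\H$), but by a genuinely different and more explicit route. The paper's proof is simulation-based and very terse: it declares $\obliv$ equivalent to the $obv.sim$ game of the half-gates scheme, lets the simulator stand in for \TC/\TS to produce $(\hat{F},\hat{X},\hat{C})$, observes that the rest of the transcript (in particular $\encd$) can be computed from the simulated labels exactly as in the real protocol, and dispatches $ct$ by semantic security. You instead run an indistinguishability-style hybrid argument: a RO hop to replace the $\sects$-derived encoding material by fresh randomness, hops that make $\ench$ and the plaintexts of $\encs$ and $ct$ independent of the labels and of $b$, and a final reduction to $obv.ind$ for the pair $(F,\En(e,(\dt^b,\const^b)))$. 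What your route buys is an explicit treatment of the decoding blob and of the double role of $\secas$ (HMAC key for $\ench$ and, masked by $\wlabel_1^{w_0}$, encryption key for $\encs$), which the paper's proof passes over silently; what the paper's route buys is brevity, since the simulator already produces all wire labels and hence the whole downstream transcript for free.

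Two small wrinkles in your write-up, neither structural. First, in the final reduction you should not literally substitute uniform strings for $\encs$ and $ct$: IND-CPA does not make ciphertexts pseudorandom. The correct (and clearly intended) move is to have the reduction itself encrypt the all-zero plaintexts under freshly sampled uniform keys, which is distributionally identical once $\secas$ and $\payloadk$ are uniform and unused elsewhere. Second, your fallback of deferring the $\ench$ hop until after $G_3$ does not actually remove the reuse of $\secas$, since $\secas$ still sits inside the key $\wlabel_1^{w_0}\oplus\secas$ of $\encs$; the clean fix is the one you also offer, namely the ROM/bad-event argument (no RO query contains $\secas$ except with probability about $q\cdot 2^{-\keylen}$, and until such a query the transcript is independent of $\secas$), applied jointly to $\ench$ and the $\encs$ key rather than hop-by-hop.
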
\vspace{-0.1in}
\begin{proof} The proof of this theorem follows directly
from the \emph{message obliviousness} security guarantee of garbled circuits
$\G$~\cite{zahur2015two} and the semantic security of the encryption scheme
$\ske$. As such, the attacker learns nothing about $(\dt, \payload, c)$ from
$(X, C, ct)$. First, note that the game $\obliv$ is equivalent to the game $obv.sim_{\mathcal{S}}$ \cite{bellare2012foundations} in \cite{zahur2015two}. Now, consider the simulator $\mathcal{S}$ as presented in Fig. 3 in \cite{zahur2015two}. In our setting, $\mathcal{S}$  is used by \TC and \TS to generate $(\hat{F},\hat{X},\hat{C})$ which is then used for the rest of the computation. Hence the obliviousness of ($x,c$) follows directly from the corresponding proof (game $obv.sim_{\mathcal{S}}$) presented in \cite{zahur2015two} assuming the random oracle model for $\H$ \cite{RO}.  The indistinguishability of $ct^b$ follows trivially from the semantic security guarantee of the encryption scheme, thereby concluding our proof. \end{proof}

We achieve security against a malicious \TAP even with a GC implementation
for the semi-honest model. Recall that the ``generators'' --- the trusted client
(TC) and the trigger service (TS) --- in \name are at least semi-honest.  Hence, 
a valid garbled circuit for the correct function $f$ is always generated (as
\TC is trusted), and all inputs are correctly encoded (since \TS is semi-honest
and the ``evaluators'' \TAP and \AS have no input).  Thus, the only way a
malicious \TAP can compromise the security of \name is by forging an inauthentic
output label or by replaying, delaying, or dropping a message.  We
discuss \name's resilience to such attacks next.

\paragraph{Authenticity.} The security guarantee \textit{authenticity} ensures
that no ppt adversary can create a garbled output $\Ytilde\ne Y$ such that \AS
acts on $\Ytilde$ (that is to say $\asexec$ outputs anything but $\bot$ or $\false$).  The formal definition is given by the security game $\auth$
as shown in~\figref{fig:games}.

\begin{theorem}[TAP authenticity]
  For any ppt adversary $\advA$, the probability that $\advA$ wins the 
  game $\auth$ is negligible, \bnm \Prob{\auth = 1}\leq \negl(\keylen)\,. \enm 
  \label{thm:TAP2}
\end{theorem}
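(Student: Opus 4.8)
The plan is to bound $\Prob{\auth=1}$ by a sum of negligible quantities via a case analysis on which of the four components of the adversary's output $(j',\Ytilde,ct',\encd')$ differs from the honest transcript $(j,\Ev(F,(\LT,C)),ct,\encd)$, reducing each case to one of: ciphertext integrity (and robustness) of the authenticated encryption scheme $\ske$; unforgeability of \hmac; execution authenticity of the garbling scheme $\G$; and collision resistance of $\H$ in the random-oracle model. The load is carried by the fact that $\sects$ and $\secas$ are sampled uniformly and never leave $\TC$, $\TS$, or $\AS$, so $\adv$'s view is independent of $\secas$ and — since $\H$ is a random oracle — of $\eir=\H(\sects\|j\|1)\lor 0^{\keylen-1}1$ and of $\payloadk=\H(\sects\|j\|2)$; in addition, obliviousness of $\G$ hides the inactive output-wire labels from the evaluator, so the only output labels $\adv$ can know are those it reads off $\Ev(F,(\LT,C))$.

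First I would prove the structural claim that whenever \asexec does not return $\bot$, the decoding material it uses is exactly the tuple $(j,\payloadk,\eir,d',h)$ produced by \cktgarbling. Since $\encs$ is encrypted under $\wlabel^{w_0}_1\oplus\secas$ with $\secas$ unknown to $\adv$, ciphertext integrity forces any $\encs'\ne\encs$ to decrypt to $\bot$ under whatever key \asexec forms, and robustness forces the honest $\encs$ to decrypt to $\bot$ under any key other than $\wlabel^{w_0}_1\oplus\secas$; hence a non-$\bot$ decryption requires $\encs'=\encs$ \emph{and} the submitted first label to equal $\wlabel^{w_0}_1=\wlabel^{w_0}_0\oplus\eir$, which — because $\eir$ and the inactive labels are hidden — can happen only when $\fone$ honestly evaluates to $\true$ and $\adv$ merely forwards the honest $\wlabel^{w_0}_1$. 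Otherwise $z=\bot$, and in the \hmac branch the only non-$\bot$ outcome is $\false$, obtainable only with first label $\wlabel^{w_0}_0$, $j'=j$, and $\ench'=\ench$ (anything else requires forging a $\hmac_{\secas}$ tag on a message other than the single one, $j\|\wlabel^{w_0}_0$, that $\adv$ has seen); this outcome coincides with the honest output of \asexec when $\fone=\false$, hence is not a break in the sense stated before the theorem ("\asexec outputs anything but $\bot$ or $\false$").

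Next I would close the loop in the remaining case, where \asexec holds the honest decoding material and reaches its output branch. The identifier check forces $j'=j$ (the blob carries the honest $j$); ciphertext integrity under the secret key $\payloadk$ forces $ct'=ct$; and the test $\H(g)=h$, with $h$ the honest $\H(\wlabel^{w_1}_0\concat\cdots\concat\wlabel^{w_m}_0)$, forces $g=\wlabel^{w_1}_0\concat\cdots\concat\wlabel^{w_m}_0$ by collision resistance in the ROM. Tracing \asexec's reconstruction of $g$ shows this holds only if every output label $\wlabel^{w_i}$ for $i\ge 1$ is one of the two legitimate labels $\{\wlabel^{w_i}_0,\wlabel^{w_i}_1\}$, i.e.\ $\Ytilde$ is a decodable garbled output; execution authenticity of $\G$ — whose adversary is handed exactly what $\adv$ has here, namely $F$ and the garbled input $(\LT,C)$ but no decoding information — then forces $\Ytilde=\Ev(F,(\LT,C))$ except with negligible probability. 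Collecting everything, $(j',\Ytilde,ct',\encd')$ can differ from the honest transcript only in coordinates \asexec never inspects on the accepting path (e.g.\ $\ench$ when $\fone=\true$), so $\AS$'s observable behaviour is identical to the honest run; a union bound over the finitely many primitive-breaking events then gives $\Prob{\auth=1}\le\negl(\keylen)$. The freshness test passes automatically since \tsexec and \asexec run back-to-back in the game, while in a real multi-message deployment it is precisely what defeats replay and excessive delay.

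The main obstacle I expect is reconciling \name's split, restructured decoding — in which $\eir$, $d'$, and $h$ are revealed to $\AS$ — with the execution-authenticity property of $\G$, which is stated for the unmodified scheme whose decoder gets no such extras and no $d$. The point to make rigorous is that this extra material is disclosed only to $\AS$, only once $\fone=\true$, and never influences anything $\AS$ outputs, so the reduction that, from a purported $\Ytilde\ne\Ev(F,(\LT,C))$ passing \asexec, extracts a decodable garbled output can be run within the interface of the authenticity game alone; I would also reuse the observation behind \thref{thm:TAP1} that disclosing $\eir$ cannot help the malicious $\TAP$, which learns neither $\eir$ nor $\secas$. A secondary care point is the random-oracle bookkeeping needed to make "$\H(g)=h$" genuinely pin down the tuple of false output labels rather than merely collide with its hash.
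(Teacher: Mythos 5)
Your proposal is correct and takes essentially the same route as the paper's proof: authenticity of the predicate output is argued via the encryption of the decoding blob under $\wlabel^{w_0}_1\oplus\secas$ together with the HMAC on the $\false$ label, and authenticity of the transformation outputs via collision resistance of $\H$ combined with the execution authenticity of the underlying garbling (the adversary knowing only the labels from $\Ev(F,(\LT,C))$), with the modified decoding absorbed into the reduction. Your write-up is somewhat more explicit than the paper's terse argument (robustness of $\ske$ under wrong keys, integrity of $ct$ and $j$, and the mismatch between the formal game's winning condition and the informal ``anything but $\bot$ or $\false$'' statement), but these are refinements of the same decomposition rather than a different approach.
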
\vspace{-0.1in}
\begin{proof} The proof follows from the non-malleability 
guarantee (IND-CCA) of the encryption scheme $\ske$, execution authenticity of
$\G$~\cite{zahur2015two}, and the collision resistance of the hash function
$\H$. 
For the rest of the proof, consider the simulator $\mathcal{S}$ in \cite{zahur2015two} which additionally returns $h=\H(\wlabel^{w_1}_0 \concat \ldots \concat \wlabel^{w_m}_0), e_r$ and $\wlabel^{w_0}_0$. \TC uses this additional information to generate the decoding blob $\encd$. Similarly, the function in $\De$ is changed to that of \textsf{ASExec}.
\vspace{0.03in}

\noindent\emph{Case I - Authenticity of $y_1=\fone(x,c)$.}\\
Note that $\encs$ is encrypted under a key derived from $\secas$ and $\wlabel^{w_0}_0$. Hence, from the semantic security of the encryption scheme, \TAP does not have access to $e_r$ since it does not know $\secas$ by design. Thus, 
in case $y_1=\false$, \TAP has access only to the $\false$ label $\wlabel^{w_0}_0$ and thereby cannot cheat \AS. On the other hand, if $y_1=\true$, \TAP can return some garbage value $L'$ such that $\skedec(L'\xor\secas,\encd)=\bot$. However, \AS can detect this with the help of the $\textsf{HMAC}$. Moreover, 
\TAP cannot send any of the hitherto unseen \textsf{HMAC}s  because it cannot obtain the output labels without access to  the corresponding $X$  (\TS's trigger data). 
\vspace{0.03in}

\noindent\emph{Case II - Authenticity of $y_2=\ftwo(x,c)$. } \\  
From the collision resistance of $\H$, the only way \TAP can cheat is by generating a label $\wlabel^{w_i}_{1-y_2[i-1]}$ for some wire $i \in [1,m]$ where $y_2[i]$ denotes the $i$-th bit of $y_2$. However,  as discussed above, \TAP cannot compute any other label other than the one obtained from $\Ev(F,X,C)$.

The rest of the proof follows an identical sequence of hybrids as the proof of Theorem 1 in \cite{zahur2015two} assuming the random oracle model for $\H$. \end{proof}

\paragraph{Protection from altering the timing of rule execution.}  An adversary
cannot forge a message that the \AS will accept due to the strong authenticity
guarantee of \stap protocol. However, it can alter the execution time of a rule
by deliberately dropping, delaying, or replaying messages. \TAP can successfully drop a message without being detected by \AS. However, this would fall under the denial-of-service attack which is beyond \name's scope (\secref{sec:threat}). \name also protects against replayed or delayed messages. 

Every message from \TS is timestamped as they are sent which $\AS$ can check before performing any action. Therefore,
\AS will reject a message --- outputting $\bot$ ---  if the received message is delayed more than $\delay$ seconds (a parameter set by \AS) since
the time it was sent from \TS. (See the function \textsf{ASExec}
in~\figref{fig:algo}.) 
We
acknowledge that the TAP can replay any message for which $\fone(x, c) = \false$
without getting detected by the AS.

Nevertheless, this does not lead to any undesirable outcome in practice because in this case \AS performs no action. Note that the above attack (replay of $\false$ labels) could have been prevented by keeping track of the last seen circuit id of each rule at \AS. However, maintaining such state information would violate the RESTfulness of \AS.

\paragraph{Tampering with circuit id $j$.}
\change{The malicious TAP can modify the circuit id $j$ --- a unique identifier
  given to every instance of a garbled circuit for synchronization between TAP,
  TS, and AS --- in whatever way they want to. But \name ensures AS will always
  be able to detect any such modification and rejects the message from TAP (by
  outputting $\bot$).  This is done by having TS include the circuit id $j$ in
  the encrypted payload $ct$ --- that TAP cannot modify. AS verifies that
  value against the circuit id forwarded by TAP, and any mismatch results in
  execution termination.} 
%
%
\change{
  Though TAP cannot tamper with $j$ without being detected, it could learn the popularity of certain rules by observing circuit id values (which are passed to TAP in plaintext to help find corresponding garbled circuit $F$ to execute).  We acknowledge that metadata attacks are a limitation in \name and we discuss a cover traffic approach to address them (\secref{sec:discussion}).
}

\paragraph{\em Security Analysis of TS and AS.}\label{sec:sec-ts}
We assume \TS and \AS are honest but curious. We define security as follows.

\begin{theorem}[$\privts$] \TS does not learn anything about the user constants $(\cone,\ctwo)$. \label{thm:TS} \end{theorem}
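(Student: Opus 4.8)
The plan is to argue that the entire view of $\TS$ --- across rule setup and all rule executions --- can be produced from quantities that are statistically independent of $\const=(\cone,\ctwo)$, so that $\TS$'s view carries no information about $\const$ at all; this yields a perfect (information-theoretic) guarantee rather than a computational one. First I would enumerate $\TS$'s view. During rule setup $\TS$ receives only the uniformly random shared key $\sects$ and runs the standard OAuth handshake; it is never sent the rule $(\fone,\ftwo)$, the constants $\const$, the garbled constants $C$, or the encrypted decoding blob $\encd$ --- these are transmitted by $\TC$ to $\TAP$ (and $\encd$ forwarded to $\AS$) and never appear in $\TS$'s transcript. During each rule execution $\TS$ obtains a trigger event $(\dt,\payload)$ from its own API together with a local counter value $j$ (initialized to $0$ and incremented by $\TS$ itself), and it runs $\tsexec$, producing and sending $(j,X,ct)$ to $\TAP$.

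Next I would check that every intermediate value inside $\tsexec$ is a function of $(\sects,j,\dt,\payload)$ and $\TS$'s local coins only: $\eis=\H(\sects\concat j\concat 0)$, $\eir=\H(\sects\concat j\concat 1)\lor 0^{\keylen-1}1$, and $\payloadk=\H(\sects\concat j\concat 2)$ depend on $(\sects,j)$ alone; $X=\En((\eis,\eir),\dt)$ depends on $(\eis,\eir,\dt)$; and $ct\getsr\skeenc(\payloadk,(t,\payload))$ depends on $(\payloadk,t,\payload)$ and the encryption randomness. Since $\const$ occurs in none of these, the distribution of $\TS$'s view is identical for every value of $\const$. Equivalently, I would exhibit a trivial simulator that, given only $(\sects,j,\dt,\payload)$, reproduces $\TS$'s view exactly without ever seeing $\const$, which formally establishes that $\TS$ learns nothing about $\const$.

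I do not expect a genuine obstacle here: the claim follows immediately once one observes that $\const$ is confined to the $\TC\!\to\!\TAP$ message $(F,C,\encd)$ and the $\TAP\!\to\!\AS$ message, neither of which is part of $\TS$'s transcript. The only points needing a little care are (i) confirming that the optional cover-traffic variant of \secref{sec:discussion} does not enlarge $\TS$'s view in a $\const$-dependent way (it merely has $\TS$ attach a blob encrypted under $\sects$ holding $j$ and a timestamp, again independent of $\const$), and (ii) noting that --- unlike the $\AS$ case --- no assumption on $\ske$ or random-oracle modeling of $\H$ is required, since the argument is purely information-theoretic.
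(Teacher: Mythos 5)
Your proposal is correct and follows essentially the same route as the paper's proof sketch: both argue that the constants $(\cone,\ctwo)$ appear only in the $\TC\to\TAP$ (and $\TAP\to\AS$) messages, while $\TS$'s entire view is computable from $(\sects, j, \dt, \payload)$ and local randomness alone, so it is independent of $\const$. Your version merely spells this out more explicitly (view enumeration, trivial simulator, information-theoretic flavor), which is a fine elaboration of the paper's "by design, $\TS$ does not have access to the client constants" argument.
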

\noindent\textbf{Proof Sketch.}
\TS only receives from the client $\sects$ and $j$, which it uses to compute the
seed $\ei=(\eis,\eir)$. Thus, it can only learn the pairs of labels for all the
input wires (including the ones for user constants) to the garbled
circuit. \TS, \emph{by design}, does not have access to the client constants.\\[-5pt]


\AS should not learn about the user constants and the trigger data beyond what
is revealed from the output of the function~$f$.  Let $y_1=\fone(x,c)$ and
$y_2=\ftwo(x,c)$. We also need to ensure that when the output of the predicate
function $y_1=\false$, \AS does not learn the output of the
function $\ftwo$ and the payload $\dtp$.  We formally state these properties,
using the theorem below.
\begin{theorem}[$\privas^0$] If $y_1=\false$, then \AS learns nothing about $(x,c,\payload)$ other than what is revealed from $y_1=\false$. \label{thm:AS1} \end{theorem}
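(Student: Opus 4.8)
I would formalize $\privas^0$ as the variant of $\privone$ in~\figref{fig:games} in which the adversary's two length-matched challenge triples $(\dt^0,\const^0,\dtp^0),(\dt^1,\const^1,\dtp^1)$ are only required to satisfy $\fone(\dt^0,\const^0)=\fone(\dt^1,\const^1)=\false$ (the transformation outputs $\ftwo(\dt^0,\const^0)$, $\ftwo(\dt^1,\const^1)$ and the payloads may all differ), and in which the adversary --- modelling the semi-honest \AS --- also receives $\secas$ alongside $(j,Y,ct,\encd)$ and the value returned by $\asexec$. The plan is to show by a short chain of game hops that this whole view is computationally independent of the challenge bit $b$, which is precisely the claim that \AS learns nothing about $(\dt,\const,\dtp)$ beyond $\fone(\dt,\const)=\false$. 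As in the proof of~\thref{thm:TAP1}, the backbone is the generator-side simulator of~\cite{zahur2015two,bellare2012foundations}.

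\emph{Step 1: the $\asexec$ verdict is $\false$ regardless of $b$.} Since $\fone=\false$, the label \TAP forwards on wire $w_0$ inside $Y$ is $\wlabel^{w_0}_0$, whereas $\encs$ was encrypted under $\wlabel^{w_0}_1\oplus\secas=\wlabel^{w_0}_0\oplus\eir\oplus\secas\ne\wlabel^{w_0}_0\oplus\secas$ (because $\eir\ne 0^{\keylen}$). By robustness of the encrypt-then-MAC scheme $\ske$, $\skedec(\wlabel^{w_0}_0\oplus\secas,\encs)=\bot$ except with probability $\negl(\keylen)$, so $\asexec$ enters its $z=\bot$ branch, recomputes $\ench'=\hmac_{\secas}(j\|\wlabel^{w_0}_0)=\ench$, and outputs $\false$. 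Hence this output is useless to the distinguisher and it suffices to analyze $(j,Y,ct,\encd=(\encs,\ench),\secas)$: $j$ does not depend on $b$, and since $\wlabel^{w_0}_0$ is exactly the first component of $Y$, $\ench=\hmac_{\secas}(j\|\wlabel^{w_0}_0)$ is a fixed, $b$-oblivious function of $(j,\secas,Y)$.

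\emph{Step 2: replace both ciphertexts by all-zero ones.} In the random-oracle model $\sects$ is uniform and never given to the adversary, so it queries $\H$ at any point $\sects\|j\|\cdot$ only with probability $\negl(\keylen)$; I therefore replace $\eis,\eir,\payloadk$ by fresh independent uniform strings (keeping the low bit of $\eir$ at $1$), after which $\Gb'(\ei,f)$ is distributed as an honest garbling $\Gb(1^{\keylen},f)$ and $\eir,\payloadk$ are genuine uniform keys. The key $\payloadk$ occurs in the view only as $ct$'s key, and the key $\wlabel^{w_0}_1\oplus\secas=\wlabel^{w_0}_0\oplus\eir\oplus\secas$ occurs only as $\encs$'s key and is computationally uniform from the adversary's viewpoint --- the offset $\eir$ is hidden behind the random oracle since \AS never obtains both labels of an output wire, and $\secas$ is hidden behind the HMAC --- so IND-CPA security of $\ske$ lets me swap both $ct$ and $\encs$ for encryptions of all-zero messages of the matching lengths. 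Now $ct$ and $\encs$ are $b$-independent and the decoding material $d',h,\eir$ has been detached from the view except for the residual presence of $\eir$ inside the output labels of $Y$ that correspond to $1$-bits of $y_2$.

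\emph{Step 3 (the main obstacle): the output labels leak nothing about $b$.} What remains is to show that the distribution of $Y=\Ev(F,(X,C))$ --- and hence of its $b$-oblivious functions $\ench$ and the rest of the view --- is independent of $b$. Here I invoke the simulation-based obliviousness of the garbling scheme of~\cite{zahur2015two}: there is a ppt simulator $\mathcal{S}$ with $(F,X,C)\approx_c\mathcal{S}(1^{\keylen},\Phi(f))$, where the side information $\Phi(f)$ (the topology of $f$) is identical for $b=0$ and $b=1$. Applying this once with $(\dt^0,\const^0)$ and once with $(\dt^1,\const^1)$, both reaching the \emph{same} simulated distribution, shows the two $b$-worlds are computationally indistinguishable, which yields $\Prob{\privas^0=1}\le 1/2+\negl(\keylen)$. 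The delicate point, and the step I would be most careful about, is exactly this one: the free-XOR offset $\eir$ is still latent in those output labels of $Y$, so a naive label-by-label argument would not obviously stop \AS from recovering $\ftwo(\dt^b,\const^b)$. I would instead lean entirely on the single wholesale obliviousness step, which is sound precisely because in this game \AS holds none of $F$, the garbled inputs $(X,C)$, the decoding blob $d'$, or $\eir$, and therefore has no way to decode $Y$ at all --- in particular it cannot read off $y_2$.
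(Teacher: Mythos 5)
Your proposal is correct and takes essentially the same route as the paper's proof: when $\fone=\false$, \AS only ever holds the $\false$ label $\wlabel^{w_0}_0$, so the decoding blob $\encs$ (keyed by $\wlabel^{w_0}_1\oplus\secas$) stays sealed by the security of $\ske$, the payload is hidden by the encryption under $\payloadk$, and the rest reduces to the obliviousness guarantee of the underlying garbling scheme of Zahur et al. Your game-hopping formalization, including the explicit treatment of the residual $\eir$ latent in the output labels, is just a more careful rendition of the paper's brief argument built from the same two ingredients.
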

\begin{proof}
To know the value of $y_2$, \AS needs access to the decoding table
$d'$ (from the \textit{obliviousness} guarantee of garbled circuits
in~\cite{zahur2015two}).  \AS will be able to do this only if it has access to $L_1^{w_0}$ (from the IND-CCA security of the encryption
scheme). Note, $L_1^{w_0}$ is available to \TAP, and subsequently to \AS, only if
$\fone(\dt)=\true$ \cite{zahur2015two}. In case \TAP returns some garbage value other than $L_1^{w_0}$, the decryption still fails. Additionally, $\payload$ is protected by the IND-CCA security of the encryption
 scheme. 
\end{proof} 

\begin{theorem}[$\privas^1$] If $y_1=\true$, then for any ppt adversary $\advB$, the probability that
  $\advB$ wins the game $\privone$ is only negligibly more than random guessing. That is,
  \bnm \Prob{\privone = 1}\le 1/2 + \negl(\keylen)\,.\enm \label{thm:AS2}
\end{theorem}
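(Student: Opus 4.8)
The plan is to show that, once the game's admissibility check $f(\dt^0,\const^0)=f(\dt^1,\const^1)$ passes, the adversary's entire transcript $(j,Y,ct,\encd)$ is distributed identically for $b=0$ and $b=1$ except for a computationally undetectable change confined to one ciphertext. I would establish this by a short sequence of game hops, reusing exactly the ingredients already invoked for \thref{thm:TAP1} and \thref{thm:AS1}: the random-oracle treatment of $\H$, message obliviousness of the garbling scheme $\G$~\cite{zahur2015two,bellare2012foundations}, and IND-CCA (hence IND-CPA) security of $\ske$. First observe that $\advB$ receives only $(j,Y,ct,\encd)$ --- not $F$, its wire labels, or the garbled inputs $(X,C)$ --- and that passing the check pins down both $\fone(\dt^b,\const^b)=\true$ (the hypothesis) and $\ftwo(\dt^b,\const^b)=:y_2$, so the garbled output value $y=(\true,y_2)$ is the same in both worlds. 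As a warm-up hop, since $\sects$ is uniform, never handed to $\advB$, and used only as a prefix of $\H$, I replace the derived values $(\eis,\eir,\payloadk)$ by freshly sampled strings of the same shape; in the random-oracle model this is perfect unless $\advB$ queries $\H$ at one of the points $\sects\|j\|0$, $\sects\|j\|1$, or $\sects\|j\|2$, an event of probability at most $q_H/2^\kappa$.

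Next I argue that the garbled material carries no information about $b$. Because $\TC$ garbles with the derandomized $\Gb'$, once $\ei$ is fixed the circuit $F$, all wire labels, the truncated decoding string $d'$, the hash $h=\H(\wlabel_0^{w_1}\concat\cdots\concat\wlabel_0^{w_m})$, the HMAC $\ench$, and the evaluation output $Y=\Ev(F,(X,C))$ corresponding to the value $y$ are deterministic functions of the circuit $f$, the seed $\ei$, the value $y$, and (for $\ench$) the uniform secret $\secas$ --- all of which are either fixed or identically distributed in the two worlds. Invoking message obliviousness of $\G$ (equivalently, running the simulator $\mathcal S$ of~\cite{zahur2015two} on $(f,y)$, exactly as in the proof of \thref{thm:TAP1}) makes the joint law of $(j,Y,d',h,\eir,\ench)$ manifestly independent of $b$. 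After this hop the only residual $b$-dependence lives inside the two ciphertexts $\encs=\skeenc(\wlabel_1^{w_0}\oplus\secas,(j,\payloadk,\eir,d',h))$ and $ct=\skeenc(\payloadk,(t,\dtp^b))$, and the plaintext of $\encs$ is already $b$-independent.

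Finally I strip the two ciphertexts. The key $\wlabel_1^{w_0}\oplus\secas$ of $\encs$ is computationally hidden from $\advB$ (it is blinded by the uniform secret $\secas$, which surfaces elsewhere only through the single HMAC $\ench$), so IND-CPA of $\ske$ lets me replace $\encs$ by an encryption of a fixed string. This is precisely where $\payloadk$ was exposed to $\advB$; with $\payloadk$ now acting solely as $ct$'s uniform, never-revealed key, a second IND-CPA hop replaces the plaintext $(t,\dtp^b)$ in $ct$ by $(t,\dtp^0)$ (the timestamp is already the same in both worlds). Nothing then depends on $b$, so $b'$ is independent of $b$ and $\Pr[b'=b]=1/2$; summing the losses gives $\Prob{\privone = 1}\le 1/2+\negl(\kappa)$. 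I expect the main obstacle to be exactly this circular use of $\payloadk$ --- it is simultaneously $ct$'s key and a plaintext field of $\encs$ --- which forces $\encs$ to be neutralised before $ct$ can be reduced; the only other step needing care is the claim that $d'$, $h$, and $Y$ genuinely do not depend on $b$, which is what the derandomized $\Gb'$ together with message obliviousness of $\G$ delivers, as already used in \thref{thm:TAP1} and \thref{thm:AS1}.
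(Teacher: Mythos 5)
Your proposal is correct for the game as written and reaches the same conclusion, but it takes a noticeably different route from the paper. The paper argues in two strokes: it declares the indistinguishability of $ct^b$ by semantic security of $\ske$, and for the garbled part it identifies $\privone$ with the simulation-based privacy game $prv.sim_{\mathcal{S}}$ of~\cite{bellare2012foundations} as instantiated in~\cite{zahur2015two}, observing that the adversary's view here is a strict subset of the standard GC-privacy adversary's view (no $F$, no $X^b,C^b$), and adds a remark that an arbitrarily misbehaving \TAP output is handled by obliviousness (\thref{thm:TAP1}). You instead run an explicit hybrid: a random-oracle hop replacing the $\sects$-derived seeds, then the observation that with the derandomized $\Gb'$ and the admissibility check the entire tuple $(j,Y,d',h,\eir,\ench)$ is a deterministic function of $(f,\ei,y,\secas)$ and hence literally identically distributed for $b=0,1$ (your appeal to message obliviousness there is actually superfluous --- the claim is information-theoretic, which is fine), and finally two encryption hops in a forced order: neutralize $\encs$ first so that $\payloadk$ --- simultaneously $ct$'s key and a plaintext field of $\encs$ --- becomes hidden, and only then reduce $ct$ to IND-CPA. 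This ordering is precisely the point the paper's one-line ``indistinguishability of $ct^b$ follows from semantic security'' glosses over, so your version is more self-contained and pins down the reduction structure; the paper's version is shorter and, by outsourcing to $prv.sim$, would also cover a stronger view in which decoding information is available to the adversary. One fine point that neither treatment fully formalizes, though you at least flag it: the key of $\encs$ is $\wlabel_1^{w_0}\oplus\secas$ with $\wlabel_1^{w_0}$ visible in $Y$, while $\secas$ is reused as the key of $\ench$, so a fully rigorous reduction needs an extra step (PRF/random-oracle treatment of \hmac, or a joint argument) before the IND-CPA hop on $\encs$ can be carried out; this is a presentational gap shared with, and no worse than, the paper's own proof.
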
\vspace{-0.1in}
\begin{proof}
The indistinguishability of $ct^b$ follows from the semantic security of the encryption scheme.  Now note that $\privone$ is equivalent to $prv.sim_\mathcal{S})$ \cite{bellare2012foundations} in  \cite{zahur2015two}. The rest of the proof is based on the proof for the corresponding game ($prv.sim_\mathcal{S})$ in \cite{zahur2015two}). In fact in our setting, the view of the $\adv$ is a strict subset of that of the adversary presented in ~\cite{zahur2015two}. Specifically, our adversary $\adv$ does not have access to the garbled inputs $X^b,C^b$ and the garbled circuit $F$.  
Note that in the above game, a malicious \TAP instead of outputting $(Y, ct) \gets \textsf{TAPExec}\left((X, ct), (F, C)\right)$, could generate some arbitrary message. However, from the obliviousness property of garbled circuits (Thm. \ref{thm:TAP1}, we know that this message has to be  completely oblivious of $(F,X,c)$ and hence the privacy guarantee is upheld trivially. 
\end{proof}

\begin{proposition}[TAP Input Indistinguishability]
  For any ppt adversary $\advA$ with access to a circuit garbled with the scheme in \cite{zahur2015two}, the probability that $\advA$ distinguishes between a valid garbled input and randomly generated input is negligibly more than random guessing.
  \label{thm:input}
\end{proposition}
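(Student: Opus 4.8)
The plan is to reduce the claim to the obliviousness guarantee of the half-gates garbling scheme of~\cite{zahur2015two} --- the same property invoked in \thref{thm:TAP1} --- via a short hybrid argument, modeling $\H$ as a random oracle~\cite{RO} exactly as in the rest of this section. Write $(F,\cdot)\gets\Gb'(e,f)$ for the real garbling, with encoding information $e=(\eis,\eir)$ that is (pseudo)uniform since it is derived from $\sects,j$ via $\H$, and let $X=\En(e,x)$ be the valid garbled input of a message $x$, and $U\getsr\zo^{n\keylen}$ fresh uniform bits (the ``fake'' garbled trigger data used for cover traffic). We must show $(F,X)\approx_c(F,U)$ for any ppt $\advA$ that receives the garbled circuit but, crucially, \emph{not} the decoding information $d$ (in \name, $d$ lives inside the blob $\encs$ encrypted under $\secas$, which the malicious \TAP never learns). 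A useful preliminary observation --- not sufficient by itself, but it guides the argument --- is that the two distributions already agree on their \emph{marginal} over the input component: under free-XOR with point-and-permute, each active input label is $\wlabel^w_0\oplus x_w\,\eir$ with $\wlabel^w_0=\H(\eis\|w)$, so (averaging over $\eir$ and using distinct random-oracle points for distinct wires) the tuple $X$ is distributed exactly as uniform over $\zo^{n\keylen}$, just like $U$. All the content is therefore in the joint distribution with $F$.

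Next I would bring in the obliviousness simulator $\mathcal{S}$ of~\cite{zahur2015two} (Fig.~3 there), which on input only the circuit topology $\Phi(f)$ outputs $(\hat F,\hat X)$ where $\hat X$ is sampled as uniformly random labels on the input wires and $\hat F$ is then built consistently with those active labels via random-oracle queries. The hybrid chain is: (i) $(F,X)\approx_c(\hat F,\hat X)$, which is precisely the $obv.sim_{\mathcal S}$ guarantee of~\cite{zahur2015two} already used in \thref{thm:TAP1}; (ii) $(\hat F,\hat X)\approx_s(\hat F,U)$, replacing the simulator's (uniform) active input labels by fresh independent uniform bits; and (iii) $(\hat F,U)\approx_c(F,U)$, which follows by projecting step~(i) to its first coordinate (so $\hat F\approx_c F$) together with the fact that $U$ is sampled independently of the circuit in both experiments. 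Composing (i)--(iii) yields $(F,X)\approx_c(F,U)$, i.e.\ the distinguishing advantage is negligible, which is the proposition. Step~(iii) is routine (any distinguisher for it yields a distinguisher for $\hat F$ vs.\ $F$ by sampling $U$ internally), and step~(i) is quoted verbatim from~\cite{zahur2015two}.

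The main obstacle is step~(ii): arguing that an adversary holding $\hat F$ cannot tell whether the labels handed to it are the ones $\mathcal{S}$ actually wired into $\hat F$ or freshly sampled uniform strings. This is exactly where it is essential that $\advA$ does not have $d$: lacking decoding information, the only leverage $\advA$ has is to trace labels through $\hat F$ using $\H$-queries, but in the random-oracle model a fresh uniform $U$ coincides with one of the (at most $qn$, where $q$ is the number of $\H$-queries) points programmed during simulation only with probability at most $qn/2^{\keylen}$; conditioned on this not happening, the trace $\advA$ obtains is an assignment of fresh, uniform random-oracle outputs at every gate, statistically indistinguishable from the ``consistent'' but equally uniform-looking trace produced from $\hat X$. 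Making this precise amounts to re-running the random-oracle bookkeeping already present in the obliviousness proof of~\cite{zahur2015two}, now restricted to an evaluator that never learns $d$; I would present it as a sequence of identical-until-bad games whose bad event is ``$U$ hits a programmed $\H$-point,'' bounding the gap in~(ii) by $qn/2^{\keylen}=\negl(\keylen)$ and completing the argument.
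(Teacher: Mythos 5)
Your argument is correct, but it follows a genuinely different route from the paper's own (very short) justification. The paper argues directly at the gate level of the half-gates construction, citing Fig.~2 of~\cite{zahur2015two}: XOR gates carry no garbled material against which a label could be validated, and for AND gates the correlated robustness of $\H$, together with the fact that \TAP ever holds at most one label per wire, means the tables $F=(T_G,T_E)$ reveal nothing about which input labels are the valid ones --- so a correct encoding and fresh random bits look alike. You instead treat the obliviousness simulator of Fig.~3 in~\cite{zahur2015two} as a black box (exactly as \thref{thm:TAP1} does) and run the hybrid chain $(F,X)\to(\hat F,\hat X)\to(\hat F,U)\to(F,U)$, with the middle step discharged by an identical-until-bad argument over the random-oracle transcript and the last step by projecting $obv.sim_{\mathcal{S}}$ indistinguishability to the marginal on $F$ and using independence of $U$. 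Both proofs ultimately rest on the same hash assumption (the simulator's existence in~\cite{zahur2015two} is itself proved from circular correlation robustness), and both correctly hinge on the fact that the distinguisher never sees the decoding information or $\eir$, which in \name sit inside the encrypted blob $\encs$. Your modular version buys uniformity with the rest of the paper's security analysis and makes the loss explicit (a $qn/2^{\keylen}$ bad-event bound plus the simulator's advantage), at the cost of being longer; the paper's direct version is terser and conveys the structural intuition that partial evaluation traces are uncheckable, but leaves the bookkeeping implicit. One small caution on your step (ii): the phrase that $\hat F$ is built ``consistently with'' the simulated active labels should not suggest a correlation the adversary could test --- as your own bad-event analysis effectively shows, without $d$ and $\eir$ the simulated tables are uniform and independent from the adversary's view, which is precisely why swapping $\hat X$ for $U$ costs only a statistical term.
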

\noindent\textbf{Proof Sketch.}
Following Fig. 2 in \cite{zahur2015two}, it is clear that $\advA$ cannot validate inputs to XOR gates. For AND gates, the fact that at most one valid label for each input wire is revealed to $\advA$ and the correlated robustness of the hash function ensures that $F = (T_G, T_E)$ does not reveal information about the valid inputs. 

\subsection{Extracting and Replacing Substrings with Garbled Circuits} 
\label{app:regex}
\begin{figure*}[t]
\centering\gamesfontsize
\begin{tabular} {p{0.5em}p{7cm}p{5cm}rrrr} 
  \toprule
  \multirow{2}{*}{\#} & \multirow{2}{*}{\textbf{Rule description}} & \multirow{2}{*}{\textbf{Functions performed}} & \textbf{GC size} & \multicolumn{2}{c}{\textbf{Data transfer (KB)}} \\
  &  & & {\bf (KB)}& \textbf{\TS\send\TAP} & {\bf \TAP\send\AS} \\ 

  \midrule
  R1 & Share {your Tweets} (excluding replies) in Slack &  \texttt{! x[Text].startwith("@")}  & 0.2 & 43 & 20 \\[2pt] 
  R2 & Get Slack notifications for {new Twitter followers} with more than 5,000 followers &  \texttt{x[FollowerCount] > 5000}   & 1.0 & 29 & 3 \\[2pt] 
  R3 & Copy {New Events from Google Calendar} into iOS Calendar  & \texttt{x[StartTime] - x[EndTime]}   & 1.0 & 33 & 32   \\[2pt] 
  R4 & Blink your lights when you receive email from a specific address &  \texttt{x[Sender] == c}   & 5.8 & 29 & 3 \\[2pt] 
  R5 & Send SMS messages for {new Shopify orders}  & \texttt{x[Phone] != null}; \newline \texttt{x[Phone].replace(" ", "")}  & 9.0 & 27 & 3 \\[2pt] 
  R6 & Add {new inbound emails} as contacts in Ontraport & \texttt{x[SenderName].split(" ", 0)}; \newline \texttt{x[SenderName].split(" ", 1)}   & 30.5 & 34 & 13  \\[2pt] 
  R7 & Create Asana tasks when {new Slack messages} start with \texttt{\$request}  & \texttt{x[Text].startwith("\$request")}; \newline \texttt{x[Text].replace("\$request")}; \newline \texttt{c2.lookup(x[Channel])} & 92.4 & 29 & 4  \\[2pt] 
  R8 & Save {new liked Tweets} with links to Pocket &  \texttt{x[Text].contain("http")} & 173.4 & 43 & 20 \\[2pt] 
  R9 & Send SMS reminders for {upcoming Google Calendar events} & \texttt{x[Description].extract\_phone()}  & 4,668.9 & 51 & 28 \\[2pt] 
 R10 & Upload {new videos in Google Drive} to YouTube   &  \texttt{x[Filename].endwith("mp4|avi|mov")}  & 12.1 & 32,133 & 32,108 \\[2pt] 
  \bottomrule
\end{tabular}
\caption[]{Selected real-world rules for our experiments \change{from both IFTTT and Zapier}. We note the size of the
  corresponding garbled circuits, and the amount of data transferred from TS to TAP
  and TAP to AS during rule execution.}
\label{fig:rules-exp}
\end{figure*}

We now discuss how \name extends the regular expression matching technique described in~\secref{sec:boolfunc} to extract and replace substrings. 

\paragraph{Finding locations of matching substring.} Given a regular expression pattern \texttt{p}, the goal is to find the starting and ending positions of the matching substrings. 

Finding the ending positions can be achieved by applying the KMP algorithm~\cite{knuth1977fast} on the pattern $p$ to convert it into a DFA (denoted by $\dfa$), so that $\dfa$ will output an accepting state at the end of each matching substring. For example, if the pattern is \texttt{ab}, we will rewrite it as \texttt{.*ab} and convert the new pattern into DFA.
Then we use our matching protocol (\secref{sec:boolfunc}) to run $\dfa$ on the input string $\vecx$. However, instead of only checking whether the final state $\State_n$ is an accepting state, we check every state $\State_1, \dots, \State_n$ produced by $\dfa$. We denote the resulting $n$-bit sequence as $e_1, \dots, e_n$. If $e_i=1$, it indicates that the $i$-th bit is the end of a matching substring.

Since a DFA can only report the end positions of matches end, we need another
DFA to find the starting positions. 
We therefore compute a DFA $\Gamma'$ on the  
reversed pattern $p$. 
If we run $\Gamma'$ on the reversed  input
string, we get the 
beginning of the matching substring. Then, like the previous step, we
run $\Gamma'$ backward on $\vecx$ (by feeding from $x_n$ to $x_1$) and check
the type of every state to generate $b_n, \dots, b_1$. If $b_i=1$, it indicates
that the $i$-th bit is the beginning of a matching substring.

Finally, we can find the locations of all matching substrings. That is, we need to compute another $n$-bit sequence $m_1, \dots, m_n$ where $m_i = 1$ if and only if the $i$-th bit is part of a matching substring.

We can observe that $m_1 = b_1$ and for any $i$ such that $2 \le i \le n$, $m_i$ can be calculated as 
$ m_i = b_i \vee (\neg e_{i-1} \wedge m_{i-1}). $


\paragraph{Extracting matching substring.} 
To extract the matching substrings, we want to replace the characters in non-matching parts with the padding character (\texttt{0x00}).
Therefore,  the output string $\vec{y} = \{ y_1, \dots, y_n \}$ is computed by $y_i= m_i \wedge x_i$.

\newcommand{\ttt}{\texttt{t}}
\paragraph{Replacing matching substring.}
In our dataset, all \texttt{replace(s,t)} functions are used with \texttt{t} set to empty string, so it is equivalent to removing the matching substring, and thus the output string $\vec{y} = \{ y_1, \dots, y_n \}$ is computed by $y_i= \neg m_i \wedge x_i$.

However, for completeness, we will describe a protocol for the general case scenario where $|\ttt|>0$, where $\ttt$ denotes the size of the string $\ttt$. 
The output string size will be 
$n\times {|\texttt{s}|\over|\ttt|}$
since the TAP should not know which substring is matched and replaced and should assume all substrings can be replaced. When 
$|\texttt{s}| \gg |\ttt|$
the sizes of the resulting garbled circuits will be unbearably large. Therefore, we purpose an alternative design approach where the actual replacement is processed in the action service: we replace the first character of each matching substring with some placeholder character, say \texttt{0xff}, and the rest with the padding character \texttt{0x00}, so the action service can invoke the following functions to complete the replacement:
\texttt{y.replace("0x00", ""); y.replace("0xff", t);} 
where \texttt{y} is the decoded output string. Note the first \texttt{replace()} is required regardless of our protocol, since it is needed for removing the padding from the input string. 

We argue this approach does not break our security goal, revealing no
additional trigger data that is not supposed to be revealed to the action
service. If the replacement string $t$ is considered sensitive 
the client can
encrypt the replacement mapping with the 1 label of the output bit corresponding
to $\bigvee_{i=1}^n m^i$, similar to how we protect $d$ and $k$
in~\figref{fig:rules-exp}.

Assuming an ASCII encoding and \texttt{0xff} as the placeholder character, we can compute the output string $\vec{y}$ using $ y_i=s_{i-(i-1 \mod 8)} \vee (\neg m_i \wedge x_i),$
where the $i-(i-1 \mod 8)$-th bit is the first bit of the character that $i$-th bit belongs.

\subsection{Implementing Supported Function}
\label{app:func-impl}

In this appendix section, we describe how to implement each operation that appears in~\figref{fig:functions}, except for Boolean and arithmetic operations, since existing GC frameworks like EMP toolkit~\cite{emp-toolkit} already provide built-in functions to efficiently translate them.

\noindent{} \texttt{x == s} and \texttt{x.startwith(s)}. A bit-wise comparison between \texttt{x} and \texttt{s} is performed up to the \texttt{min(len(x), len(s))} bit, and results are feed into a large AND gate as output. For \texttt{x == s}, We additionally check if the next remaining character in \texttt{x} or \texttt{s} is a padding character.

\noindent{} \texttt{x.endwith(s)} and \texttt{x.contain(s)}. These functions need to be first converted to a correspondingly regular expression and then matched against \texttt{x}.


\noindent \texttt{x.replace(s, t)}. We can apply the DFA replacement technique described in Appendix \ref{app:regex} directly for this type of functions.

\noindent\texttt{x.extract\_phone()} and \texttt{extract\_email()}. We apply the DFA extraction described in Appendix \ref{app:regex} by constructing appropriate regular expressions.
However, as we need the matching results to be non-overlapping, one modification is needed : we can append \texttt{[\string^a-Z0-9]}
to the regular expression and shift the final matching position forward by 1 character.

\noindent{} \texttt{x.split(d,i)}. Without loss of generality, we assume \texttt{d} is a single character. First we need to create two regular expressions, $\dfa_1$ and $\dfa_2$, to that output accepting states when the \texttt{i}-th and \texttt{i+1}-th occurrences of \texttt{d} is encountered. Once we have the starting and ending location of the substring, we can proceed with substring extractions.

\noindent{} \texttt{x.truncate(n)}. We can keep a variable counter \texttt{c} that gets increased after each bit in \texttt{x} is processed. And each output bit \texttt{y[i]} is computed by \texttt{x[i] \& (n > c)}.

\noindent{} \texttt{x.tolowercase()}. Assume ASCII encoding, for each character in \texttt{x}, we first check if the last five bits are in the valid ranges; if so, we flip the sixth bit.

\noindent{} \texttt{m.lookup(x)}. First, we compare \texttt{x} with each key of \texttt{m}, and store the matching results into a \texttt{len(m)}-bit sequence. We denote this sequence as \texttt{b}. Then, the output \texttt{y} is computed iteratively by \texttt{y = (b[i] \& v[i]) | (!b[i] \& y)} as \texttt{i} ranges from 1 to \texttt{len(m)}.

\subsection{Rules used for performance evaluation}\label{sec:perf-appendix}
The descriptions for the selected trigger-action rules we evaluated in Section \ref{sec:macro} are shown in \figref{fig:rules-exp}.



\end{document}